\newtheorem{definition}{Definition}
\newtheorem{theorem}{Theorem}
\newtheorem{lemma}{Lemma}
\newtheorem{observation}{Observation}
\newtheorem{conjecture}{Conjecture}
\newcommand{\tbd}{support}
\newcommand{\degleq}[1]{\ensuremath V^{{\scriptscriptstyle \leq} #1}}
\newcommand{\deggt}[1]{\ensuremath V^{{\scriptscriptstyle >}#1}}
\newcommand{\nbrsleq}[2]{\ensuremath N_{#1}^{{\scriptscriptstyle \leq}#2}}
\newcommand{\nbrsgt}[2]{\ensuremath N_{#1}^{{\scriptscriptstyle >}#2}}
\newcommand{\low}{\degleq{3}}
\newcommand{\high}{\deggt{3}}
\newcounter{claim}
\newenvironment{claim}
{\medskip\par\noindent\itshape\refstepcounter{claim}Claim~\theclaim.}{\medskip\par}
\begin{document}
\title{Adjacent vertex distinguishing total coloring of 3-degenerate graphs}
\author[1]{Diptimaya Behera}
\author[1]{Mathew C. Francis}
\author[2]{Sreejith K. Pallathumadam}
\affil[1]{Indian Statistical Institute, Chennai Centre}
\affil[2]{Indian Institute of Technology Madras}
\date{}
\maketitle
\begin{abstract}
A total coloring of a simple undirected graph $G$ is an assignment of colors to its vertices and edges such that the colors given to the vertices form a proper vertex coloring, the colors given to the edges form a proper edge coloring, and the color of every edge is different from that of its two endpoints. That is, $\phi:V(G)\cup E(G)\rightarrow\mathbb{N}$ is a total coloring of $G$ if $\phi(u)\neq\phi(v)$ and $\phi(uv)\neq\phi(u)$ for all $uv\in E(G)$, and $\phi(uv)\neq\phi(uw)$ for any $u \in V(G)$ and distinct $v,w \in N(u)$ (here, $N(u)$ denotes the set of neighbours of $u$).
A total coloring $\phi$ of a graph $G$ is said to be ``Adjacent Vertex Distinguishing'' (or AVD for short) if for all $uv\in E(G)$, we have that $\phi(\{u\}\cup\{uw:w\in N(u)\})\neq\phi(\{v\}\cup\{vw\colon w\in N(v)\})$.
The AVD Total Coloring Conjecture of Zhang, Chen, Li, Yao, Lu, and Wang (\textit{Science in China Series A: Mathematics}, 48(3):289--299,
2005) states that every graph $G$ has an AVD total coloring using at most $\Delta(G)+3$ colors, where $\Delta(G)$ denotes the maximum degree of $G$.
For some $s\in\mathbb{N}$, a graph $G$ is said to be $s$-degenerate if every subgraph of $G$ has minimum degree at most $s$. 
Miao, Shi, Hu, and Luo (\textit{Discrete Mathematics}, 339(10):2446--2449, 2016) showed that the AVD Total Coloring Conjecture is true for 2-degenerate graphs.  
We verify the conjecture for 3-degenerate graphs. 
\end{abstract}
\section{Introduction}

All graphs in this paper are assumed to be finite, simple, and undirected.
For any $k\in\mathbb{N}$, we let $[k]=\{1,2,\ldots,k\}$.
For a graph $G$, we denote by $V(G)$ and $E(G)$ the vertex and edge sets of $G$ respectively.
For $u \in V(G)$, we denote by $N_G(u)$ the set of neighbours of $u$ in $G$.
Further, we let $\partial_G(u)$ be the set of edges that are incident on $u$ in $G$, i.e. $\partial_G(u)=\{uv\colon v\in N_G(u)\}$. We also define $\partial_G^\star(u)=\{u\}\cup\partial_G(u)$.
The degree of a vertex $u$ in $G$, denoted by $deg_G(u)$, is defined as $deg_G(u)=|N_G(u)|$.
The maximum degree of the graph $G$, denoted by $\Delta(G)$, is defined as $\Delta(G)=\max_{u\in V(G)} deg_G(u)$.

A proper vertex coloring of a graph $G$ using colors from a set $C$ is a mapping $\phi: V(G) \rightarrow C$ such that for any $uv \in E(G)$, $\phi(u)\neq \phi(v)$.
Similarly, a proper edge coloring of $G$ using colors from a set $C$ is a mapping $\phi: E(G) \rightarrow C$ such that for any two distinct edges $e, e' \in \partial_G(u)$ for some $u \in V(G)$, $\phi(e)\neq \phi(e')$.
A total coloring $\phi$ of a graph $G$ using colors from a set $C$ is a mapping $\phi:V(G)\cup E(G)\rightarrow C$ such that $\phi|_{V(G)}$ is a proper vertex coloring of $G$, $\phi|_{E(G)}$ is a proper edge coloring of $G$, and for each $uv\in E(G)$, $\phi(u)\neq\phi(uv)$.

\begin{definition}
A total coloring $\phi$ of a graph $G$ such that for each $uv\in E(G)$, $\phi(\partial^\star(u))\neq\phi(\partial^\star(v))$ is called an \emph{Adjacent Vertex Distinguishing} (or \emph{AVD} for short) total coloring of $G$.
\end{definition}

The following conjecture is due to Zhang et al. \cite{zhang2005adjacent}.

\begin{conjecture}[AVD Total Coloring Conjecture]
Every graph $G$ has an AVD total coloring using at most $\Delta(G)+3$ colors.    
\end{conjecture}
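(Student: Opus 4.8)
The plan is to attack the conjecture in full generality via the standard framework for total-colouring results: a minimal counterexample together with a discharging argument. Suppose the conjecture fails and let $G$ be a counterexample minimizing $|V(G)|+|E(G)|$, and set $k=\Delta(G)+3$. By minimality every proper subgraph of $G$ has an AVD total colouring with $k$ colours, so $G$ has no vertex of degree $\le 1$ and no trivial cut structure, since such a vertex or cut is immediately reducible: delete it, colour the rest by minimality, and extend. The core of the argument is then to assemble a list of \emph{reducible configurations} — local structures that cannot occur in $G$ — and to prove by discharging (assign an initial charge to each vertex, typically $\deg(u)-\Delta$ or similar, and redistribute so that total charge is preserved but some vertex is forced to have positive charge) that every graph must contain one of these configurations, producing the contradiction.

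The configurations I would first try to rule out are the familiar low-degree ones: a vertex of degree $2$ adjacent to a vertex of bounded degree, two adjacent small-degree vertices, short paths of degree-$2$ vertices, and so on. For each, one deletes the relevant edges or vertices, applies minimality, and re-colours the deleted part; since the number of colours forbidden at any edge or vertex is its degree plus a bounded constant, the $\Delta+3$ palette leaves slack to complete an ordinary total colouring. The genuine difficulty is that we must \emph{also} arrange $\phi(\partial^\star(u))\neq\phi(\partial^\star(v))$ for every edge $uv$, and this is a non-local condition: re-colouring a single edge at $u$ changes the colour set seen at $u$ and may create a clash with a neighbour that was previously distinguished, so the usual ``enough free colours'' counting does not by itself close the reduction.

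To control the AVD constraint I would track, for each vertex $u$, the full colour set $\phi(\partial^\star(u))$ rather than only the forbidden colours. A useful structural observation is that if $\phi(\partial^\star(u))=\phi(\partial^\star(v))$ on an edge $uv$ then these sets have equal size, forcing $\deg(u)=\deg(v)$; moreover a degree-$d$ vertex uses only $d+1$ colours on $\partial^\star(u)$, leaving $\Delta-d+2\ge 2$ colours unused even when $d=\Delta$. The $+3$ (rather than $+1$) is precisely what should supply room to break any set-collision by a local swap — recolour one edge at $u$ to an unused colour, changing $\phi(\partial^\star(u))$ without introducing a new collision. For the high-degree regime, where the discharging combinatorics become unwieldy, I would switch to a probabilistic argument, colouring randomly and applying the Lovász Local Lemma or entropy compression to simultaneously avoid the bad events ``$uv$ has $\phi(\partial^\star(u))=\phi(\partial^\star(v))$'', each of which depends on only boundedly many others relative to $\Delta$.

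The step I expect to be the genuine obstacle — indeed the step where this programme currently stalls — is producing a \emph{complete} unavoidable set of reducible configurations valid for \emph{all} maximum degrees within the exact additive constant $3$. The known reductions succeed on sparse graphs (the $2$- and $3$-degenerate cases settled in the literature and in the present paper) exactly because degeneracy forces many low-degree vertices to serve as reducible seeds; in a general dense graph no such local structure is guaranteed, the discharging has no natural place to route charge, and the probabilistic methods known for AVD total colouring yield only $\Delta+O(1)$ with a constant larger than $3$, or the sharp bound $\Delta+3$ only once $\Delta$ exceeds a threshold. Bridging the sparse and dense regimes to cover every graph, with the tight constant $3$, is where a new idea is required. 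In fact, since even the weaker Total Colouring Conjecture of Behzad and Vizing (asserting a $\Delta+2$ bound for ordinary total colouring) remains open, a complete proof of the statement as worded should be expected to demand a breakthrough well beyond the degeneracy techniques that establish the cases treated here.
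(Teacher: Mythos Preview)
The statement you were asked to prove is the AVD Total Coloring Conjecture itself, and the paper does \emph{not} prove it: it is stated as Conjecture~1, attributed to Zhang et al., and left open in full generality. The paper's contribution is Theorem~\ref{thm:avdtc-3degenerate}, which verifies the conjecture only for $3$-degenerate graphs. So there is no ``paper's own proof'' of this statement to compare your attempt against.

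Your proposal is not a proof either, and to your credit you do not pretend otherwise. What you have written is a reasonable survey of the standard minimal-counterexample-plus-discharging framework, together with an honest diagnosis of where it stalls: you cannot produce a complete unavoidable set of reducible configurations for arbitrary graphs at the exact constant~$3$, the probabilistic methods give only a larger additive constant, and even the weaker Total Coloring Conjecture is open. All of this is accurate. The upshot is simply that the task as posed --- prove Conjecture~1 --- is an open problem, and neither you nor the paper resolves it. If the intent was for you to prove the paper's actual result (the $3$-degenerate case), that is a different statement with a genuine proof in the paper, and your discussion does not engage with it.
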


A graph $G$ is said to be \emph{$s$-degenerate} if every subgraph of $G$ contains a vertex of degree at most $s$.  
It is immediate from the definition that every subgraph of an $s$-degenerate graph is also $s$-degenerate.
In this paper, we show that all 3-degenerate graphs satisfy the AVD Total Coloring Conjecture.

\begin{theorem}\label{thm:avdtc-3degenerate}
Every $3$-degenerate graph $G$ has an AVD total coloring using at most $\Delta(G)+3$ colors.
\end{theorem}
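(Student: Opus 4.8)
\medskip
\noindent\textbf{Proof proposal.}
The plan is to argue by contradiction: let $G$ be a counterexample minimizing $|V(G)|+|E(G)|$. Then every proper subgraph $H$ of $G$ is $3$-degenerate with $\Delta(H)\le\Delta(G)$, so by minimality $H$ has an AVD total coloring using at most $\Delta(H)+3\le\Delta(G)+3$ colors. One first checks that $G$ is connected, has no isolated vertex, and (since a graph of maximum degree at most $2$ is a disjoint union of paths and cycles, all of which easily satisfy the conjecture) that $\Delta(G)\ge 3$. Since $G$ is $3$-degenerate it has a vertex $v$ with $deg_G(v)\le 3$. The remainder of the argument fixes a palette of $\Delta(G)+3$ colors and shows that, whatever the degrees of the neighbours of $v$ and their mutual adjacencies, an AVD total coloring of a suitable proper subgraph of $G$ extends to all of $G$ --- the contradiction. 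In most cases that subgraph is $G-v$; in some configurations we delete $v$ together with a few well-chosen edges; and if $G$ contains a pendant path whose internal vertices all have degree $2$ we delete the whole path and handle it as a unit.

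The basic extension move is to colour the chosen subgraph by minimality, then colour the edges at $v$ one at a time (in a carefully chosen order) and finally colour $v$ itself. Two bookkeeping facts drive the counting. First, when picking a colour for an edge $uv$, the colours forbidden at $u$ for properness ($\phi(u)$ and the colours of the other edges at $u$) all belong to $\phi(\partial^\star(u))$, whereas a colour that would force $\phi(\partial^\star(u))=\phi(\partial^\star(w))$ for some neighbour $w$ of $u$ with $deg_G(w)=deg_G(u)$ is necessarily a colour \emph{not} in $\phi(\partial^\star(u))$; hence the ``properness'' pool and the ``AVD'' pool of forbidden colours are disjoint, and each neighbour of $u$ contributes at most one colour to the AVD pool. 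Second, colouring $v$ last is cheap, since $\phi(\partial^\star(v))$ has at most $4$ elements and can equal $\phi(\partial^\star(w))$ only for a neighbour $w$ of degree exactly $deg_G(v)$. Combining these, whenever every neighbour of $v$ has degree sufficiently below $\Delta(G)$, enough colours survive for every edge at $v$ and for $v$, and the extension goes through.

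The main obstacle is the opposite situation: $v$ has a neighbour $u$ whose degree $d$ is close to $\Delta(G)$ and which itself has many neighbours of degree $d$, so that the edge $uv$ may have on the order of $2d$ forbidden colours while only $\Delta(G)+3$ are available. This is where the case analysis becomes delicate, and the devices I expect to use are: deleting, along with $v$, a small well-chosen set of edges at such a high-degree neighbour $u$, so that after recolouring those few edges there is extra room at $u$ coming from its temporarily uncoloured edges; resolving a single residual AVD clash at $u$ by a short local recolouring of an edge at $u$ (or of $u$ itself), choosing the new colour to dodge the finitely many constraints it creates; and exploiting sparsity --- a $3$-degenerate graph on $n\ge 3$ vertices has at most $3n-6$ edges, so only few vertices have large degree and a large-degree vertex cannot have all of its neighbours of large degree, which bounds how adversarial the neighbourhood of $u$ can be. The pendant-path configuration is dispatched by propagating a colouring of the rest of $G$ along the path, the degree-$2$ internal vertices leaving ample freedom.

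Once every local configuration at a vertex of degree at most $3$ has been shown reducible, no minimum counterexample can exist, which proves Theorem~\ref{thm:avdtc-3degenerate}.
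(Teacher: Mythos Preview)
Your outline is not yet a proof: the hard case you yourself flag is left unresolved, and the device you invoke to close it is incorrect as stated. You write that in a $3$-degenerate graph ``a large-degree vertex cannot have all of its neighbours of large degree''. That is false for an \emph{arbitrary} large-degree vertex. Take $\Delta\ge 5$, a vertex $u$ joined to $x_1,\dots,x_{\Delta-1}$ and to a leaf $v$, and give each $x_i$ exactly $\Delta-1$ private leaves. This graph is $1$-degenerate, yet $u$ has $\Delta-1$ neighbours of degree $\Delta$. If you colour $G-v$ and try to extend across $uv$, properness at $u$ forbids the $\Delta$ colours in $\phi(\partial^\star_{G-v}(u))$, and for each $x_i$ with $\phi(\partial^\star_{G-v}(u))\subset\phi(\partial^\star_{G-v}(x_i))$ the unique extra colour is also forbidden; it is easy to arrange an AVD total colouring of $G-v$ in which these extra colours exhaust $C\setminus\phi(\partial^\star_{G-v}(u))$, so no colour remains for $uv$. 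Your fallback of deleting a few more edges at $u$ just pushes the same obstruction to the other endpoint of each deleted edge, and nothing in your sketch prevents that endpoint from being equally bad.

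The paper avoids this by inverting the viewpoint. Instead of deleting a vertex of degree at most $3$, it locates a $3$-\emph{pivot} $u$ of degree at least $4$: a vertex in $\deggt{3}(G)$ with $|\nbrsgt{G}{3}(u)|\le 3$ (this is what $3$-degeneracy actually guarantees, via Observation~\ref{obs:pivot}). One then deletes one or two edges from $u$ to its \emph{low}-degree neighbours and inducts on edges. Because $u$ has at most three neighbours in $\deggt{3}(G)$, there are at most three AVD constraints at $u$ to dodge when the edges are restored, and a short case analysis (Lemmas~\ref{lem:smallpivot} and~\ref{lem:main}) finishes. Two further ingredients make the induction close: the paper works with \emph{partial} AVD total colourings in which vertices of $\low(G)$ may be uncoloured and the AVD condition is only enforced on edges of $G[\high(G)]$ (so clashes among low-degree vertices are deferred to a final clean-up, Lemma~\ref{lem:conflict}); and the induction hypothesis is strengthened by a ``support'' $\mathcal{S}$, a matching on non-adjacent degree-$2$ vertices whose incident-edge colour sets must differ, which is exactly the extra information needed when two low-degree neighbours of $u$ are deleted simultaneously. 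None of these mechanisms appear in your proposal, and without the pivot idea the extension step does not go through.
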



\section{Background}
One of the elementary results in the study of graph colorings is that every graph $G$ has a proper vertex coloring using at most $\Delta(G)+1$ colors. In fact, the classical theorem of Brooks states that at most $\Delta(G)$ colors are required unless $G$ is a complete graph or an odd cycle. As for proper edge colorings, the Vizing-Gupta Theorem states that every graph $G$ has a proper edge coloring using at most $\Delta(G)+1$ colors.
The Total Coloring Conjecture, one of the most famous open problems in the theory of graph coloring, proposed independently by Vizing~\cite{vizing1968some} and Behzad~\cite{behzad1965graphs}, states that every simple graph $G$ has a total coloring using at most $\Delta(G)+2$ colors. It is not hard to see that any total coloring of a graph $G$ requires at least $\Delta(G)+1$ colors.

Early works by Rosenfeld~\cite{rosenfeld1971total}, Vijayaditya~\cite{vijayaditya1971total}, and Kostochka~\cite{kostochka1977total,kostochka1996total} established that the Total Coloring Conjecture is true for graphs having maximum degree at most 5.
Isobe, Zhou, and Nishizeki~\cite{isobe2007total} proved that every $s$-degenerate graph $G$ having $\Delta(G)\geq 4s+3$ satisfies the Total Coloring Conjecture. In fact, they show the stronger result that any such graph $G$ has a total coloring using $\Delta(G)+1$ colors. Short proofs are known for the Total Coloring Conjecture for 3-degenerate graphs (see for example~\cite{tengthreedegen}). The conjecture has not been proven for 4-degenerate graphs in general.

A crucial difference between total coloring and AVD total coloring, which is not hard to see, is that an AVD total coloring of a supergraph of a graph $G$, when restricted to $G$ may not yield an AVD total coloring of $G$. The AVD Total Coloring Conjecture is known to be true for 4-regular graphs~\cite{papaioannou2014avdtc}, graphs $G$ having $\Delta(G)=3$~\cite{wang2007adjacent,chen2008adjacent,hulgan2009concise} and $\Delta(G)=4$~\cite{lu2017adjacent}. Hulgan~\cite{hulgan2010graph} conjectured that every graph $G$ having $\Delta(G)=3$ have an AVD total coloring using 5 colors. Note that any graph $G$ that contains two adjacent maximum degree vertices needs at least $\Delta(G)+2$ colors in any AVD total coloring. There are graphs $G$ that require $\Delta(G)+3$ colors in any AVD total coloring: odd cycles and complete graphs of odd order, for example.
The AVD Total Coloring Conjecture was shown to be true for complete graphs and bipartite graphs by Zhang et al.~\cite{zhang2005adjacent}. Huang, Wang and Yan~\cite{huang2012note} showed that every graph $G$ has an AVD total coloring using at most $2\Delta$ colors.
Hu et al.~\cite{hu2019adjacent} showed that the conjecture is true for planar of maximum degree at least 9, and Verma, Fu and Panda~\cite{verma2022adjacent} showed that the conjecture is true for split graphs.
The conjecture was shown to be true for outerplanar graphs by Wang and Wang~\cite{wang2010adjacent}. Later, Miao et al.~\cite{miao2016adjacent} showed that the conjecture is true for 2-degenerate graphs.
\section{Notation}
Let $G$ be a graph and $s \in \mathbb{N}$.
We denote by $\degleq{s}(G)$ the vertices of $G$ that have degree at most $s$. We define $\deggt{s}(G)=V(G)\setminus \degleq{s}(G)$.
For a vertex $u\in V(G)$, we define $\nbrsgt{G}{s}(u)=N_G(u)\cap\deggt{s}(G)$ and $\nbrsleq{G}{s}(u)=N_G(u)\cap\degleq{s}(G)$.
For a graph $G$ and a set $V' \subseteq V(G)$, let $G[V']$ denote the subgraph of $G$ induced by $V'$ and $G-V'$ denote the graph $G[V\setminus V']$. 
Moreover, for set $E'\subseteq E(G)$, let $G-E'$ denote the subgraph $G'$ of $G$ such that $V(G')=V(G)$ and $E(G')=E(G)\setminus E'$.

A vertex is said to be an \emph{$s$-pivot} if it has at most $s$ neighbours in $\deggt{s}(G)$, i.e. $u$ is an $s$-pivot if $|\nbrsgt{G}{s}(u)|\leq s$. Clearly, every vertex in $\degleq{s}(G)$ is by definition an $s$-pivot. The following observation is folklore.

\begin{observation}\label{obs:pivot}
Let $G$ be an $s$-degenerate graph. If $\Delta(G)> s$, then $G$ has an $s$-pivot having degree greater than $s$.
\end{observation}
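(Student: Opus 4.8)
The plan is to restrict attention to the subgraph of $G$ spanned by the vertices of large degree and to invoke degeneracy there. Set $H=G[\deggt{s}(G)]$. Since $\Delta(G)>s$, there is at least one vertex of degree greater than $s$, so $\deggt{s}(G)\neq\emptyset$ and $H$ is a nonempty subgraph of $G$. Because $G$ is $s$-degenerate and (as the excerpt notes) $s$-degeneracy is inherited by subgraphs, $H$ is itself $s$-degenerate; in particular, taking $H$ as a subgraph of itself in the definition of $s$-degeneracy, $H$ contains a vertex $u$ with $deg_H(u)\le s$.

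I would then verify that this $u$ witnesses the observation. On the one hand, $u\in\deggt{s}(G)$, so $deg_G(u)>s$, which gives the required high-degree vertex. On the other hand, since $H$ is the subgraph of $G$ induced by $\deggt{s}(G)$, the neighbours of $u$ in $H$ are exactly the neighbours of $u$ in $G$ that lie in $\deggt{s}(G)$; that is, $N_H(u)=\nbrsgt{G}{s}(u)$. Hence $|\nbrsgt{G}{s}(u)|=deg_H(u)\le s$, which is precisely the defining condition for $u$ to be an $s$-pivot. Thus $u$ is an $s$-pivot of degree greater than $s$, as claimed.

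I do not anticipate any genuine obstacle: the whole argument is a one-step application of degeneracy to the induced subgraph on the high-degree vertices, and the only auxiliary fact it uses — that a subgraph of an $s$-degenerate graph is again $s$-degenerate — is immediate from the definition and already recorded in the excerpt. Accordingly I would present the proof as a single short paragraph rather than a formal multi-step argument.
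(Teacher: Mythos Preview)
Your proof is correct and essentially identical to the paper's own argument: the paper also passes to the induced subgraph on $\deggt{s}(G)$ (written there as $G-\degleq{s}(G)$), uses $s$-degeneracy to find a vertex $v$ of degree at most $s$ in that subgraph, and observes that $N_{G'}(v)=\nbrsgt{G}{s}(v)$. No changes are needed.
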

\begin{proof}
Let $G'=G-\degleq{s}(G)$. Since $\Delta(G)>s$, we have that $V(G')\neq\emptyset$. As $G$ is $s$-degenerate, we have that $G'$ is also $s$-degenerate, which implies that there exists $v\in V(G')$ such that $deg_{G'}(v)\leq s$. Since $v\in V(G')$, we have that $deg_G(v)> s$ and that $\nbrsgt{G}{s}(v)= N_{G'}(v)$. As $deg_{G'}(v)\leq s$, it follows that $|\nbrsgt{G}{s}(v)|\leq s$. Thus, $v$ is an $s$-pivot in $G$ having degree greater than $s$.
\end{proof}
\section{Proof of the main result}
We shall prove the AVD Total Coloring Conjecture for 3-degenerate graphs $G$ having $\Delta(G)\geq 5$.
Any total coloring of a graph having maximum degree 1 can easily be seen to be an AVD total coloring, and clearly any such graph has a total coloring using at most 3 colors. Graphs having maximum degree 2 are disjoint unions of paths and cycles, and for these graphs, it is not hard to see that there exist AVD total colorings using at most 5 colors.
As mentioned above, the AVD Total Coloring Conjecture is known to be true for graphs $G$ having $\Delta(G)=3$ and $\Delta(G)=4$. Thus, our proof for the conjecture for 3-degenerate graphs having maximum degree at least 5 proves the conjecture for the whole class of 3-degenerate graphs.
\medskip


\noindent\textbf{Proof sketch:} Our proof will be based on induction on the number of edges of the graph.
The inductive proof yields what we call a ``partial AVD total coloring''  of $G$ (see Definition~\ref{def:partialcoloring}) using at most $\max\{8,\Delta(G)+3\}$ colors. In this kind of coloring, all the edges are colored, but some vertices in $\degleq{3}(G)$ may not be colored. All vertices in $\deggt{3}(G)$ are colored and the AVD condition is maintained between every pair of adjacent vertices in $\deggt{3}(G)$, but the AVD condition may not be satisfied for two adjacent vertices in $\degleq{3}(G)$. A partial AVD total coloring of a graph can be converted into an AVD total coloring as long as at least 7 colors are available (see Lemma~\ref{lem:conflict}).

\begin{definition}\label{def:partialcoloring}
A \emph{partial} total coloring of a graph $G$ using colors from a set $C$ is a mapping $\phi:\high(G)\cup V'\cup E(G)\rightarrow C$ for some $V'\subseteq\low(G)$ such that:
\begin{itemize}
\item $\phi|_{\high(G)\cup V'}$ is a proper vertex coloring of $G[\high(G)\cup V']$,
\item $\phi|_{E(G)}$ is a proper edge coloring of $G$, and
\item for each $uv\in E(G)$ such that $u\in \high(G)\cup V'$, $\phi(u)\neq\phi(uv)$.
\end{itemize}
If in addition to the above conditions, $\phi$ also satisfies the condition that for each $uv\in E(G[\high(G)])$, $\phi(\partial_G^\star(u))\neq\phi(\partial_G^\star(v))$, then we say that $\phi$ is a partial AVD total coloring of $G$.
\end{definition}

For the inductive step in Lemma~\ref{lem:main} to work, we need a notion that we call a ``\tbd'' for the graph. This is nothing but a list of pairs of degree 2 vertices in the graph such that the two vertices in each pair are nonadjacent, and no vertex appears in more than one pair. The formal definition is given below.

\begin{definition}
We say that a set $\mathcal{S}\subseteq\{\{u,v\}\colon u,v\in V(G), deg_G(u)=deg_G(v)=2, uv\notin E(G)\}$ is a \emph{\tbd} for a graph $G$ if we have $S\cap S'=\emptyset$ for distinct $S,S'\in\mathcal{S}$.
\end{definition}

We say that a partial total coloring of a graph $G$ ``satisfies'' a \tbd\ $\mathcal{S}$ of $G$ if for each $\{u,v\}\in \mathcal{S}$, the set of colors on the incident edges of $u$ is not the same as the set of colors on the incident edges of $v$. The formal definition is given below.
\begin{definition}
For a graph $G$ and a \tbd\ $\mathcal{S}$ of $G$, we say that a (partial) total coloring $\phi$ of $G$ \emph{satisfies} $\mathcal{S}$ if for each $\{u,v\}\in \mathcal{S}$ we have $\phi(\partial_G(u))\neq\phi(\partial_G(v))$.
\end{definition}


We are now ready to state our main result which is slightly stronger than what we require. Note that the following theorem can be applied setting $\mathcal{S}=\emptyset$ to obtain the claimed result.

\begin{theorem}\label{thm:main}
For every 3-degenerate graph $G$ having $\Delta(G)\geq 5$ and any \tbd\ $\mathcal{S}$ of $G$, there is an AVD total coloring of $G$ using at most $\Delta(G)+3$ colors that satisfies $\mathcal{S}$.
\end{theorem}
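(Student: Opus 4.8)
The plan is to prove Theorem~\ref{thm:main} by induction on $|E(G)|$, using the machinery set up in the excerpt. The base case ($|E(G)|$ small, or more conveniently $\Delta(G)$ handled directly) is immediate since any graph with $\Delta(G)\geq 5$ has enough edges, and we may assume connectivity. For the inductive step we invoke Observation~\ref{obs:pivot}: since $\Delta(G)\geq 5>3$, the graph $G$ has a $3$-pivot $v$ of degree greater than $3$, i.e.\ a vertex $v$ with $\deg_G(v)>3$ and $|\nbrsgt{G}{3}(v)|\leq 3$. The strategy is to pick a suitable edge (or a small vertex) incident to $v$, delete it, apply induction to get an AVD total coloring (or, via the partial-coloring intermediary, a partial AVD total coloring) of the smaller graph satisfying an appropriately modified \tbd, and then extend the coloring back to $G$, re-establishing the AVD condition only where it matters (between pairs of high-degree vertices, and on the pairs listed in $\mathcal{S}$). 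The key observation that makes the induction tractable is that low-degree vertices (degree $\leq 3$) need not satisfy the AVD condition among themselves during the induction — that is deferred to a final clean-up step handled by Lemma~\ref{lem:conflict}, which converts a partial AVD total coloring into a genuine one whenever at least $7$ colors are available (and $\Delta(G)+3\geq 8>7$).

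The heart of the argument is the case analysis on the structure around the pivot $v$. Let $L=\nbrsleq{G}{3}(v)$ and $H=\nbrsgt{G}{3}(v)$, so $|H|\leq 3$ and $|L|=\deg_G(v)-|H|$ is large when $\deg_G(v)$ is large. I would distinguish cases according to $\deg_G(v)$ and the sizes and internal structure of $L$ and $H$; in each case, delete an edge $vx$ with $x\in L$ (or handle a degree-$2$ neighbour specially so as to feed it into $\mathcal{S}$), obtain by induction a coloring $\phi$ of $G-vx$ satisfying the induced \tbd, and then re-color the edge $vx$ and possibly recolor a bounded number of other edges/vertices near $v$. A standard counting argument controls the extension: the edge $vx$ must avoid the colors on $\partial_G(v)\setminus\{vx\}$ (at most $\deg_G(v)-1$ colors), on $\partial_G(x)\setminus\{vx\}$ (at most $\deg_G(x)-1\leq 2$ colors), on $v$ and on $x$ (at most $2$ more). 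Since the palette has $\Delta(G)+3\geq \deg_G(v)+3$ colors, there is always a free color for $vx$; the subtlety is to additionally ensure (i) the AVD condition between $v$ and each of its (at most $3$) neighbours in $H$, (ii) the AVD / \tbd\ conditions for $x$ if $x$ is relevant, and (iii) that we have not broken any \tbd\ pair or AVD pair elsewhere — but deletion of a single edge only affects conditions local to $v$ and $x$, so only these need rechecking. For each neighbour $w\in H$, the AVD condition $\phi(\partial_G^\star(v))\neq\phi(\partial_G^\star(w))$ is a single forbidden coloring of the full edge-set around $v$; since $v$ has many edges and $|H|\leq 3$, a further counting argument (possibly recoloring one or two edges at $v$, exploiting the large degree of $v$) gives enough slack to kill all three bad events simultaneously.

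The main obstacle I anticipate is the case where the pivot $v$ has small degree, specifically $\deg_G(v)\in\{4,5\}$, so that the "slack" from $v$ having many edges is minimal, and where moreover several neighbours of $v$ in $H$ are maximum-degree vertices, each imposing an AVD constraint, while the low neighbours in $L$ may themselves be degree-$2$ vertices whose other neighbour is also constrained. In such tight configurations one cannot simply delete one edge and greedily extend; instead I would delete $v$ entirely (or delete $v$ together with suppressing a degree-$2$ neighbour, recording the resulting pair of degree-$2$ vertices into $\mathcal{S}$ so the inductive hypothesis guarantees they end up distinguished — this is exactly why the theorem is stated with the auxiliary \tbd\ $\mathcal{S}$ rather than just $\mathcal{S}=\emptyset$), apply induction to $G-v$, and then carefully color all of $\partial_G^\star(v)$ at once. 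Coloring a whole star at once against AVD constraints is where a delicate argument is needed: one orders the edges of $\partial_G(v)$, colors them greedily, and reserves the last one or two colors to break the $\le 3$ AVD constraints with $H$ and any \tbd\ constraints; here one uses that $\Delta(G)+3$ exceeds $\deg_G(v)$ by at least $3$ when $v$ is not itself a maximum-degree vertex, and a separate (known, $\Delta\le 4$ or direct) argument or a more careful local recoloring when $v$ has degree equal to $\Delta(G)$. Getting all these subcases to close — particularly reconciling the \tbd\ bookkeeping across the induction with the AVD requirements at the pivot — is the technical core of the proof.
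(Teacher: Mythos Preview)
Your overall architecture---prove by induction on $|E(G)|$ that a \emph{partial} AVD total coloring satisfying $\mathcal{S}$ exists, using a $3$-pivot for the inductive step, and then invoke Lemma~\ref{lem:conflict} for the clean-up---is exactly the paper's route (this is Lemma~\ref{lem:main} followed by Lemma~\ref{lem:conflict}). Where your sketch diverges, and where the gap lies, is in the mechanics of the inductive step.

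You propose deleting a \emph{single} edge $vx$ with $x$ a low-degree neighbour of the pivot, and in tight cases deleting the pivot entirely and recoloring its whole star. The paper does neither. It always deletes \emph{two} edges $uv_1,uv_2$ from the pivot $u$ to two low neighbours $v_1,v_2$ (the exceptional case where the pivot has only one low neighbour---forcing $\deg_G(u)=4$---is handled separately in Lemma~\ref{lem:smallpivot}). The two-edge deletion is the key device your plan is missing: it yields a two-dimensional choice $\{c_1,c_2\}$ for the restored edges, and one counts ``non-equivalent'' extensions via the sets $\zeta(X_1,X_2)$ of Observation~\ref{obs:sets}, where $X_i\subseteq X=\overline{\varphi}(\partial^\star_{G'}(u))$ and $|X|\geq 4$. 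One then shows either that there are four pairwise non-equivalent extensions (so one of them avoids all three AVD conflicts with $\nbrsgt{G}{3}(u)$), or that there are three non-equivalent extensions using strictly fewer than $|X|$ colors (so the unused color in $X$ can be placed on $u$ itself to kill all three conflicts simultaneously). With a single-edge deletion you have only a one-dimensional choice, and when $\deg_G(u)$ is close to $\Delta(G)$ the number of admissible colors for $uv$ can drop to one or two, which is not enough to dodge three AVD constraints and a support constraint; your ``delete the whole pivot'' fallback would then require an extension argument at least as intricate as the paper's, and you have not supplied one.

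Your account of how the support is used in the induction is also off. The paper does not record pairs arising from suppressing a degree-$2$ neighbour of a deleted pivot. Rather, after removing $uv_1,uv_2$, if the vertices $v'_1,v'_2$ (each $v'_i$ being $v_i$ or its partner in $\mathcal{S}$) both have degree $2$ in $G'$ and are nonadjacent, the pair $\{v'_1,v'_2\}$ is \emph{added} to the support passed to the recursive call. The sole purpose of this is to force $\varphi(\partial_{G'}(v'_1))\neq\varphi(\partial_{G'}(v'_2))$, which is exactly what guarantees $|X_1\cup X_2|\geq 3$ and makes the $\zeta$-counting close. This is the reason the theorem is stated with an arbitrary support $\mathcal{S}$; your sketch intuits that $\mathcal{S}$ is there to carry information across the induction, but misidentifies which information.
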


We first show that given any partial AVD total coloring of a graph $G$ using at least 7 colors, one can construct an AVD total coloring of $G$ using the same number of colors.

\begin{lemma}\label{lem:conflict}
Let $C$ be a set of colors such that $|C|\geq 7$.
For any graph $G$ and any \tbd\ $\mathcal{S}$ of $G$, if there exists a partial AVD total coloring $\phi$ of $G$ using colors from $C$ that satisfies $\mathcal{S}$, then there is an AVD total coloring of $G$ using colors from $C$ that satisfies $\mathcal{S}$.
\end{lemma}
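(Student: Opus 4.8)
The plan is to start from the partial AVD total coloring $\phi$ and fix up the only thing that can be wrong: vertices in $\low(G)$ that are either uncolored or that violate the AVD condition with an adjacent low-degree vertex. First I would color every uncolored vertex $u\in\low(G)$: since $\deg_G(u)\le 3$, the colors forbidden for $u$ are at most the $3$ colors on its incident edges plus at most $3$ colors on its neighbours, so with $|C|\ge 7$ there is always a free color. This makes $\phi$ a genuine total coloring; the AVD condition still holds across every edge of $G[\high(G)]$, and $\phi$ still satisfies $\mathcal S$ (recoloring vertices does not change edge color sets). The remaining defects are edges $uv$ with $u$ low such that $\phi(\partial_G^\star(u))=\phi(\partial_G^\star(v))$; call such an edge a \emph{conflict edge}. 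Note a conflict edge must have both endpoints of the same degree $d\in\{1,2,3\}$, and all $d+1$ colors in the two stars must coincide as sets.

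Next I would resolve conflicts locally, one connected ``cluster'' of conflict edges at a time, by recoloring vertices and/or swapping a small number of edge colors in a bounded neighbourhood. The key observation is that the color-set equality $\phi(\partial_G^\star(u))=\phi(\partial_G^\star(v))$ is a fragile condition: changing the color of $u$ itself to any color not in $\phi(\partial_G^\star(u))$ and not creating a new vertex conflict with a neighbour of $u$ immediately destroys the conflict on every edge incident to $u$, since $\phi(u)$ then lies outside $\phi(\partial_G^\star(v))$ for each neighbour $v$. The danger is that such a recoloring of $u$ could (i) be impossible because too few colors are free, or (ii) create a \emph{new} conflict on another edge incident to $u$, namely if some neighbour $w$ of $u$ is low and $u$'s new star color-set happens to equal $w$'s. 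For (i): a low vertex $u$ has at most $3$ incident edges and at most $3$ neighbours, so the list of colors that are either on $\partial_G(u)$, or equal to $\phi(v)$ for a neighbour $v$, has size at most $6$; with $7$ colors there is always at least one admissible choice, and in fact we can usually choose among two, which we use to dodge (ii). The heart of the argument is a careful case analysis on the number of low neighbours of a conflict-edge endpoint and on $d$, showing that one can always pick recolorings of the relevant low vertices (using, when necessary, a Kempe-type swap on a path of length $1$ or $2$ among edges of two colors to change an edge color in one of the stars) so that afterwards no edge is a conflict edge. Since every modification is confined to the closed neighbourhood of a low vertex and touches only low vertices' colors plus edges incident to low vertices, it cannot disturb the AVD condition on edges of $G[\high(G)]$ (whose endpoints and incident-edge color-multisets are unaffected), nor the satisfaction of $\mathcal S$ once we observe that the edge color sets of the degree-$2$ vertices paired in $\mathcal S$ are preserved — or, if an edge recoloring is forced at such a vertex, we re-verify the pair directly, which is easy since $\mathcal S$-pairs are nonadjacent and a degree-$2$ vertex offers enough slack.

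The main obstacle I expect is handling the case $d=3$ with both endpoints having several low neighbours, where recoloring one endpoint risks cascading into new conflicts, and where simultaneously one must respect $\mathcal S$: a degree-$2$ vertex that belongs to a conflict edge may also belong to an $\mathcal S$-pair, so an edge-color swap used to break the conflict must not restore a forbidden color-set equality for that pair. I would address this by processing conflict components in an order that recolors pendant-like low vertices first and by always keeping a spare admissible color in reserve; the bound $|C|\ge 7 = 2\cdot 3 + 1$ is exactly what guarantees that reserve. If a direct recoloring of an endpoint is blocked on all fronts, the fallback is to recolor instead a single edge incident to it via a two-color alternating swap, which changes that endpoint's star color-set without affecting its other neighbours' stars, and a short check shows one of these two moves (recolor the vertex, or swap an incident edge) always succeeds.
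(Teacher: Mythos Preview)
Your overall plan---extend $\phi$ to all vertices, then repeatedly recolor one endpoint of a violating edge to reduce the number of violations---is exactly the paper's approach, and your observation that vertex recolorings alone preserve both the AVD condition on $G[\high(G)]$ and the support $\mathcal S$ is the right one. The gap is that you do not push the vertex-recoloring count far enough, and then fall back on edge swaps that actually break the argument.

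Concretely: your fallback of recoloring an edge incident to a low vertex $u$ is not safe. Such an edge $uw$ may have $w\in\high(G)$, and changing its color changes $\phi(\partial_G^\star(w))$, which can destroy the AVD condition between $w$ and one of its neighbours in $\high(G)$---so your claim that ``endpoints and incident-edge color-multisets'' on high--high edges are ``unaffected'' is false. The same edge recoloring can also break $\mathcal S$, and ``re-verify the pair directly'' is not an argument. Fortunately, no edge recoloring is ever needed. Let $uv$ be a violating edge and set $X=\phi(\partial_G^\star(u))=\phi(\partial_G^\star(v))$, so $|X|\le 4$ and $|C\setminus X|\ge 3$. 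Call $c\in C\setminus X$ \emph{bad} for a neighbour $w$ of $u$ if $\phi(w)=c$ or $\phi(\partial_G^\star(w))\setminus X=\{c\}$. No color in $C\setminus X$ is bad for $v$ (since $\phi(\partial_G^\star(v))=X$), and each of the at most two remaining neighbours contributes at most one bad color (if $\phi(w)\in C\setminus X$ then it is the unique element of $\phi(\partial_G^\star(w))\setminus X$, so the two conditions pick out the same color). Hence some $c\in C\setminus X$ is bad for no neighbour; recoloring $u$ to $c$ keeps the coloring proper, creates no new violating edge at $u$, and kills the edge $uv$. This is the sharpening that replaces your ``usually choose among two'' hand-wave and eliminates the need for any edge swap or cluster processing.
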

\begin{proof}
We first observe that every partial AVD total coloring $\psi$ of $G$ using colors from $C$ in which at least one vertex is not colored can be extended to a partial AVD total coloring of $G$ using colors from $C$ with a lesser number of vertices that are not colored. This is because any vertex $u$ that is not colored in $\psi$ has degree at most 3 and therefore sees at most $|\partial_G(u)\cup N_G(u)|\leq 6$ colors on its incident edges and neighbours. As $|C|\geq 7$, there is a color in $C$ which can be assigned to $u$ to obtain the required partial AVD total coloring of $G$. Note that the new coloring satisfies a support $\mathcal{S}$ of $G$ if $\psi$ satisfies $\mathcal{S}$, since no edge was recolored in the process of constructing it from $\psi$.

Let $G$ be any graph and $\phi$ be a partial AVD total coloring of $G$ using colors from $C$ that satisfies $\mathcal{S}$. Using the procedure described in the above paragraph repeatedly, we can obtain a partial AVD total coloring $\phi'$ of $G$ using colors from $C$ that satisfies $\mathcal{S}$ in which all vertices are colored.

We say that an edge $uv\in E(G)$ is a ``violating edge'' if $\phi'(\partial_G^\star(u))=\phi'(\partial_G^\star(v))$. Clearly, since $\phi'$ is a partial AVD total coloring of $G$, if $uv$ is a violating edge, then $u,v\in\low(G)$. If there are no violating edges in $\phi'$, then $\phi'$ is an AVD total coloring of $G$, and we are done.
So let us assume that there is a violating edge $uv\in E(G)$. We shall show how to modify the color of the vertex $u$ to get a partial AVD total coloring of $G$ in which there is at least one less violating edge than in $\phi'$. Since only a vertex is getting recolored, the new coloring also satisfies $\mathcal{S}$. By repeating this procedure, we arrive at a coloring with no violating edges, and this will finish the proof.

We now show how to recolor the vertex $u$ so that the number of violating edges in the new coloring is at least one less than that in $\phi'$.
Let $X=\phi'(\partial_G^\star(v))=\phi'(\partial_G^\star(u))$.
Clearly, $|X|\leq 4$.
For each $w\in N_G(u)$ and color $c\in C\setminus X$, we say that $c$ is ``bad'' for $w$ if either $\phi'(w)=c$ or $\phi'(\partial_G^\star(w))\setminus X=\{c\}$. It immediately follows from this definition that there is no color in $C\setminus X$ that is bad for $v$. Further, it is easy to verify that for any $w\in N_G(u)$, there exists at most one color in $C\setminus X$ that is bad for $w$. Since $|C|\geq 7$, $|X|\leq 4$, and $deg_G(u)\leq 3$, we now have that there exists a color $c\in C\setminus X$ that is not bad for any neighbour of $u$.
Now we recolor $u$ with $c$ to obtain $\varphi$. We claim that $\varphi$ is a partial AVD total coloring of $G$ satisfying $\mathcal{S}$ with lesser number of violating edges than $\phi'$.

Clearly, $\varphi$ satisfies $\mathcal{S}$ as it was obtained from $\phi'$ without recoloring any edges.
Notice that the edges incident on $u$ all have colors from $X$ in $\phi'$ and therefore also in $\varphi$, and $\varphi(u)=c\notin X$.
Also, since $c$ was not bad for any neighbour of $u$, we can conclude that every neighbour of $u$ had a color different from $c$ in $\phi'$, and therefore also in $\varphi$.
Thus, $\varphi$ is a partial AVD total coloring of $G$.
Note that since $\varphi$ is obtained from $\phi'$ by just modifying the color of the vertex $u$, any edge that is not violating in $\phi'$ but is violating in $\varphi$ must be incident on $u$.
Suppose that there exists $w\in N_G(u)$ such that $uw$ is a violating edge in $\varphi$. Then $\varphi(\partial_G^\star(w))=\varphi(\partial_G^\star(u))$, which implies that $\varphi(\partial_G^\star(w))\setminus X=\varphi(\partial_G^\star(u))\setminus X=\{c\}$. But since $\phi'(\partial_G^\star(w))=\varphi(\partial_G^\star(w))$, we now have that $\phi'(\partial_G^\star(w))\setminus X=\varphi(\partial_G^\star(w))\setminus X=\{c\}$, implying that $c$ is bad for $w$. This contradicts the choice of $c$. We can therefore conclude that no edge incident on $u$ is a violating edge in $\varphi$.
Thus $\varphi$ has at least one less violating edge than $\phi'$. This completes the proof.
\end{proof}

Note that by the above lemma, it only remains to be shown that every 3-degenerate graph $G$ having $\Delta(G)\geq 5$ has a partial AVD total coloring using at most $\Delta(G)+3$ colors. First, we show that every subcubic graph has a partial AVD total coloring using at most 7 colors that satisfies any given support of it.

\begin{lemma}\label{lem:subcubic}
Let $G$ be a graph such that $\Delta(G)\leq 3$, and let $C\subseteq\mathbb{N}$ such that $|C|\geq 7$. Further, let $\mathcal{S}$ be any support of $G$. Then there is a partial AVD total coloring of $G$ using colors from $C$ that satisfies $\mathcal{S}$.
\end{lemma}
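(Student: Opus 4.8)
The plan is to leave \emph{every} vertex of $G$ uncolored and to build the coloring entirely on the edges. Since $\Delta(G)\le 3$, we have $\high(G)=\emptyset$, so, taking $V'=\emptyset$ in Definition~\ref{def:partialcoloring}, every condition there that mentions vertex colors is vacuous; in particular the AVD condition, which ranges over the edges of $G[\high(G)]$, is vacuous. Thus a partial AVD total coloring of $G$ is nothing but a proper edge coloring of $G$, and ``satisfying $\mathcal{S}$'' simply asks that $\phi(\partial_G(u))\neq\phi(\partial_G(v))$ for every $\{u,v\}\in\mathcal{S}$. Recall that every vertex occurring in a pair of $\mathcal{S}$ has degree exactly $2$, so for such a vertex $\phi(\partial_G(u))$ is a set of exactly two colors.

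First I would fix an arbitrary proper edge coloring $\phi$ of $G$ using colors from $C$; one exists since, e.g., Vizing's theorem gives a proper edge coloring using at most $\Delta(G)+1\le 4\le|C|$ colors. Call a pair $\{p,p'\}\in\mathcal{S}$ \emph{violating} if $\phi(\partial_G(p))=\phi(\partial_G(p'))$ for the current coloring $\phi$. If there is no violating pair, $\phi$ is already as desired. Otherwise I would repeatedly apply the recoloring step below, each instance of which strictly decreases the number of violating pairs.

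\smallskip
\noindent\emph{Recoloring step.} Pick a violating pair $\{p,p'\}\in\mathcal{S}$ and let $X=\phi(\partial_G(p))=\phi(\partial_G(p'))$, so $|X|=2$. Choose one of the two edges at $p$, say $e=pq$; note that $q\neq p'$ since $pp'\notin E(G)$, and let $pr$ denote the other edge at $p$. Recolor $e$ with a color $c\in C$ chosen so that: (i) $c\neq\phi(f)$ for every edge $f\neq e$ incident to $p$ or to $q$, so that $\phi$ remains a proper edge coloring; (ii) $\{c,\phi(pr)\}\neq X$, so that $\{p,p'\}$ ceases to be violating (here we use that $\phi(\partial_G(p'))$ is unaffected, as $p'\notin\{p,q\}$); and (iii) if $q$ has degree $2$ and belongs to a pair $\{q,q''\}\in\mathcal{S}$, then, writing $qs$ for the other edge at $q$, $\{c,\phi(qs)\}\neq\phi(\partial_G(q''))$. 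Condition (i) forbids one color at $p$ and at most two colors at $q$; condition (ii) forbids at most one color (only when $\phi(pr)\in X$); condition (iii), when it applies, forbids at most one color, but in that case $deg_G(q)=2$, so (i) forbids only one color at $q$. Hence at most $4$ colors are forbidden for $c$, and since $|C|\ge 7$ a valid $c$ exists. To see that this is genuine progress, observe that recoloring $e=pq$ changes the color set $\phi(\partial_G(w))$ only for $w\in\{p,q\}$, and hence can change the ``violating'' status of a pair of $\mathcal{S}$ only if that pair contains $p$ or $q$; since no vertex lies in two pairs of $\mathcal{S}$, the only pairs whose status can change are $\{p,p'\}$ (made non-violating by (ii)) and, if it exists, $\{q,q''\}$ (kept non-violating by (iii)). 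So the number of violating pairs strictly decreases, and after finitely many steps we obtain a proper edge coloring of $G$ --- equivalently, a partial AVD total coloring of $G$ with every vertex left uncolored --- that satisfies $\mathcal{S}$.

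\smallskip
I expect the only delicate point to be the color count in the recoloring step: one must simultaneously preserve properness of the edge coloring, repair the chosen pair, and avoid spoiling any other pair, while keeping the number of forbidden colors below $7$. The disjointness of $\mathcal{S}$ (no vertex in two pairs) is precisely what makes ``not spoiling another pair'' cost at most one extra forbidden color, and $\Delta(G)\le 3$ keeps the properness cost small; together these also ensure that the procedure is monotone, hence terminating.
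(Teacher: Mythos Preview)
Your proof is correct and takes a genuinely different route from the paper. Both arguments begin with the same key observation: since $\Delta(G)\le 3$ we have $\high(G)=\emptyset$, so with $V'=\emptyset$ a partial AVD total coloring of $G$ is simply a proper edge coloring, and the task reduces to finding one in which $\phi(\partial_G(u))\neq\phi(\partial_G(v))$ for every $\{u,v\}\in\mathcal{S}$. From there the paper proceeds by induction on $|E(G)|$: it deletes an edge $v_1v_2$, applies the inductive hypothesis to $G'=G-\{v_1v_2\}$ with the pairs touching $v_1,v_2$ removed from $\mathcal{S}$, uncolors all vertices, and then colors $v_1v_2$ avoiding the at most six colors on $\partial_{G'}(\{v_1,v_2,v'_1,v'_2\})$. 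You instead start from an arbitrary proper edge coloring (via Vizing) and repair violating pairs one at a time by recoloring a single edge; the disjointness of $\mathcal{S}$ together with $\Delta(G)\le 3$ keeps the number of forbidden colors for the recolored edge at most four, well under $|C|\ge 7$, and guarantees monotone progress. Your approach makes it very transparent that the lemma is really an edge-coloring statement and uses a smaller forbidden-color count; the paper's induction, on the other hand, dovetails with the edge-deletion framework used throughout the rest of the argument. One cosmetic remark: your parenthetical ``only when $\phi(pr)\in X$'' in condition~(ii) is always satisfied, since $X=\phi(\partial_G(p))$ contains $\phi(pr)$ by definition; the forbidden color there is precisely the old color of $pq$.
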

\begin{proof}
We prove this by induction on $|E(G)|$. If $|E(G)|=0$, then clearly, $\mathcal{S}=\emptyset$, and then $G$ has a trivial partial AVD total coloring using colors from $C$ that satisfies $\mathcal{S}$. So let us assume that $E(G)\neq\emptyset$.
Let $v_1v_2\in E(G)$ and let $G'=G-\{v_1v_2\}$. Let $\mathcal{S}'=\mathcal{S}\setminus\{P\in \mathcal{S}\colon\{v_1,v_2\}\cap P\neq\emptyset\}$. Since $\Delta(G')\leq 3$, we have by the inductive hypothesis that $G'$ has a partial AVD total coloring $\phi$ using colors from $C$ that satisfies $\mathcal{S}'$. Let $\phi'$ be obtained from $\phi$ by uncoloring all vertices that are colored in $\phi$.
For each $i\in\{1,2\}$, we define $v'_i\in V(G)$ as follows.
If $deg_G(v_i)=2$ and there exists $w\in V(G)$ such that $\{v_i,w\}\in \mathcal{S}$, then we define $v'_i=w$. Otherwise, we define $v'_i=v_i$.
Since $|\partial_{G'}(\{v_1,v_2\}\cup\{v'_1,v'_2\})|\leq 6$ and $|C|\geq 7$, there exists $c\in C\setminus\phi'(\partial_{G'}(\{v_1,v_2\}\cup\{v'_1,v'_2\}))$. We now construct an extension $\varphi$ of $\phi'$ by coloring the edge $v_1v_2$ with $c$. It can be verified that $\varphi$ is a partial AVD total coloring of $G$ that satisfies $\mathcal{S}$.
\end{proof}

Recall that by Observation~\ref{obs:pivot}, any 3-degenerate graph that is not subcubic contains a 3-pivot of degree at least 4. First, we deal with the case when the graph contains such a 3-pivot of having at most one neighbour of degree at most 3. Note that such a 3-pivot will have degree exactly 4.

\begin{lemma}\label{lem:smallpivot}
Let $G$ be a 3-degenerate graph, $\mathcal{S}$ be a \tbd\ of $G$, and $C\subseteq\mathbb{N}$ such that $|C|\geq 8$. Suppose that $u$ is a 3-pivot of $G$ such that $deg_G(u)=4$ and $\nbrsleq{G}{3}(u)=\{v\}$. Let $\mathcal{S}'=\mathcal{S}\setminus\{P\in \mathcal{S}\colon v\in P\}$. If there exists a partial AVD total coloring of $G'=G-\{uv\}$ using colors from $C$ that satisfies $\mathcal{S}'$, then there is a partial AVD total coloring of $G$ using colors from $C$ that satisfies $\mathcal{S}$.
\end{lemma}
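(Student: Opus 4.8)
The plan is to produce the desired colouring of $G$ by modifying the given partial AVD total coloring $\phi$ of $G'$, colouring the new edge $uv$ and assigning a colour to $u$. As a preliminary step I would uncolour, in $\phi$, the vertices $u$ and $v$ whenever $\phi$ colours them; this is harmless, since $u$ has degree $3$ and $v$ degree at most $2$ in $G'$, so both lie in $\low(G')$, and the result is still a partial AVD total coloring of $G'$ satisfying $\mathcal{S}'$ (a condition on edge colours only). Write $N_G(u)=\{v,x_1,x_2,x_3\}$, where $x_1,x_2,x_3\in\high(G)$ because $\nbrsleq{G}{3}(u)=\{v\}$, and put $a_i=\phi(ux_i)$ (three distinct colours, left untouched from now on) and $p_i=\phi(x_i)$. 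The key simplification is that, in passing from $G'$ to $G$, the only AVD requirements not automatically inherited are those of the edges $ux_1,ux_2,ux_3$: every other edge of $G[\high(G)]$ already lies in $G'$, is disjoint from $\{u,v\}$, and so the colour-sets $\phi(\partial_G^\star(\cdot))$ of both its endpoints remain unchanged. Similarly, the only pair of $\mathcal{S}$ whose satisfaction requires attention is a pair containing $v$, which forces $deg_G(v)=2$.

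First I would colour $uv$ with a colour $c$ avoiding $\{a_1,a_2,a_3\}$, the at most two colours on the other edges at $v$, and -- if $v$ belongs to some pair $\{v,w\}$ of $\mathcal{S}$ -- the at most one value of $c$ for which $\phi(\partial_G(v))$ would equal $\phi(\partial_G(w))$; at most five forbidden colours in all, so $c$ exists. Then I would colour $u$ with a colour avoiding the four edge colours $a_1,a_2,a_3,c$ and the three coloured neighbour colours $p_1,p_2,p_3$, subject to not creating an AVD conflict at any $ux_i$. One notes that $x_i$ can come into AVD conflict with $u$ only if it is \emph{dangerous}, meaning $deg_G(x_i)=4$ and $\{a_1,a_2,a_3\}\subseteq\phi(\partial_G^\star(x_i))$: otherwise $\phi(\partial_G^\star(x_i))$ either has more than five colours or omits some $a_j\in\phi(\partial_G^\star(u))$, so it differs from $\phi(\partial_G^\star(u))$ no matter how $u$ is coloured. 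For a dangerous $x_i$ we have $\phi(\partial_G^\star(x_i))=\{a_1,a_2,a_3,b_i,b_i'\}$, where $\{b_i,b_i'\}$ is a two-element set disjoint from $\{a_1,a_2,a_3\}$, and AVD at $ux_i$ fails exactly when $\{c,\phi(u)\}=\{b_i,b_i'\}$. So, with $c$ fixed, each dangerous $x_i$ with $c\in\{b_i,b_i'\}$ forbids one more colour for $u$, namely the partner of $c$ in $\{b_i,b_i'\}$.

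The heart of the argument is to show that the set $F(c)$ of colours forbidden for $u$ (the four above plus the $p_i$'s plus the partners) has size at most $7$; since $|C|\geq 8$, a legal colour for $u$ then exists and the construction is complete. The key fact is that $p_i=\phi(x_i)$ lies in $\phi(\partial_G^\star(x_i))=\{a_1,a_2,a_3,b_i,b_i'\}$, so for a dangerous $x_i$ either $p_i\in\{a_1,a_2,a_3\}$ or $p_i\in\{b_i,b_i'\}$. Splitting $\{1,2,3\}$ into $I_0=\{i:p_i\in\{a_1,a_2,a_3,c\}\}$ and its complement $I_1$: if a dangerous $x_i$ with $i\in I_1$ contributes a partner, then $p_i\in\{a_1,a_2,a_3,c,\text{partner}\}$ but $p_i\notin\{a_1,a_2,a_3,c\}$, forcing the partner to be $p_i$ -- so all partners from indices of $I_1$ lie in $\{p_j:j\in I_1\}$, while indices of $I_0$ contribute at most $|I_0|$ partners. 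Hence the colours forbidden for $u$ lying outside $\{a_1,a_2,a_3,c\}$ number at most $|\{p_j:j\in I_1\}|+|I_0|\leq|I_1|+|I_0|=3$, giving $|F(c)|\leq 7$. I expect this charging argument to be the main obstacle. The rest is routine verification: that $c$ and the colour chosen for $u$ respect properness, that the choice of $c$ handles the (at most one) pair of $\mathcal{S}$ through $v$, and that no AVD pair or $\mathcal{S}$-pair holding under $\phi$ is destroyed -- which is immediate because only the edge $uv$ is added and only $u$ is recoloured.
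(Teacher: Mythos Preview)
Your argument is correct and follows a genuinely different route from the paper's proof. The paper first assigns a colour to $u$ (any colour missing from the three incident edge colours and the three coloured neighbours $x_1,x_2,x_3$), then tries two candidate colours $c_1,c_2\in X\setminus Y$ for the edge $uv$; if both create an AVD conflict, it deduces that the conflicting neighbours are distinct, locates two common ``missing'' colours $c_3,c_4$ at both of them, and recolours $u$ with one of these (chosen with care so that $u$ is also distinguished from the remaining high-degree neighbour $x_3$). Your approach instead fixes the colour $c$ of $uv$ first---handling the support constraint at that stage---and then shows by a single charging argument that at most seven colours are forbidden for $u$: the key observation that, for a dangerous $x_i$ with $p_i\notin\{a_1,a_2,a_3,c\}$, the ``partner'' colour forced by $c\in\{b_i,b_i'\}$ must coincide with $p_i$ collapses the naive ten-colour obstruction down to seven. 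Your method is more streamlined, replacing the paper's try-two-then-repair case analysis with one pigeonhole count; the paper's approach is perhaps easier to discover, proceeding by natural attempt and correction. Both use $|C|\geq 8$ in the same essential way.
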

\begin{proof}
Notice that $|\nbrsgt{G}{3}(u)|=3$ and that $\mathcal{S}'$ is a support of $G'$.
Suppose that there exists a partial AVD total coloring $\phi$ of $G'$ using colors from $C$ that satisfies $\mathcal{S}'$. Let $\phi'$ be obtained from $\phi$ by uncoloring all the vertices in $\low(G')$ (note that $u$ gets uncolored if it was colored in $\phi$), and then assigning a color $c\in C\setminus(\phi(\partial_{G'}(u)\cup\nbrsgt{G'}{3}(u))$ to $u$ (note that $c$ exists since $|C|\geq 8$, $|\partial_{G'}(u)|=3$ and $|\nbrsgt{G'}{3}(u)|=3$). It is easy to see that $\phi'$ is a partial AVD total coloring of $G'$ using colors from $C$ that satisfies $\mathcal{S}'$.
We define $v'\in V(G)$ as follows.
If $deg_G(v)=2$ and there exists $w\in V(G)$ such that $\{v,w\}\in \mathcal{S}$, then we define $v'=w$. Otherwise, we define $v'=v$.
We define $Y\subseteq C$ as follows.
If $v=v'$ then we let $Y=\phi'(\partial_{G'}(v))$. If $v\neq v'$, then we let $Y=\phi'(\partial_{G'}(v'))$ if $\phi'(\partial_{G'}(v))\subseteq\phi'(\partial_{G'}(v'))$, and $Y=\phi'(\partial_{G'}(v))$ otherwise. It is easy to verify that in every case, $|Y|\leq 2$ and $\phi'(\partial_{G'}(v))\subseteq Y$.

For the sake of convenience, if $\psi$ is a partial total coloring of a graph $H$ using colors from $C$ and $F\subseteq V(H)\cup E(H)$, we shall denote by $\overline{\psi}(F)$ the set $C\setminus\psi(F)$.

Let $X=\overline{\phi'}(\partial_{G'}^\star(u))$ (thus $X$ is the set of colors in $C$ that do not appear on $u$ or on the edges incident on $u$ in $\phi'$). Since $deg_{G'}(u)=3$, we have that $|X|\geq |C|-4\geq 4$. We now have that $|X\setminus Y|\geq 2$. Let $\{c_1,c_2\}\subseteq X\setminus Y$. Let $\varphi_1,\varphi_2$ be the extensions of $\phi'$ obtained by coloring the edge $uv$ with $c_1$ and $c_2$ respectively; i.e. for $i\in\{1,2\}$, $\varphi_i=\phi'\cup\{(uv,c_i)\}$. By our choice of $Y$, it follows that $\varphi_1$ and $\varphi_2$ are both partial total colorings of $G$ using colors from $C$. It can also be verified that both $\varphi_1$ and $\varphi_2$ satisfy $\mathcal{S}$. If one of them is also a partial AVD total coloring, then we are done. So we assume that neither $\varphi_1$ nor $\varphi_2$ is a partial AVD total coloring of $G$. Since $\phi'$ is a partial AVD total coloring of $G'$, this can only mean that for each $i\in\{1,2\}$, there exists $x_i\in \nbrsgt{G}{3}(u)$ such that $\varphi_i(\partial_G^\star(u))=\varphi_i(\partial_G^\star(x_i))$. Since $c_1\in\varphi_1(\partial_G^\star(u))\setminus \varphi_2(\partial_G^\star(u))$ and $c_2\in\varphi_2(\partial_G^\star(u))\setminus \varphi_1(\partial_G^\star(u))$, it follows that $x_1\neq x_2$. Notice that $deg_G(x_1)=deg_G(x_2)=deg_G(u)=4$.
As $|C|\geq 8$, we have that for each $i\in\{1,2\}$, $|\overline{\varphi_i}(\partial_G^\star(x_i))|\geq 3$.
Observe that for each $i\in\{1,2\}$, we have $\phi'(\partial_{G'}^\star(x_i))=\varphi_i(\partial_G^\star(x_i))=\varphi_i(\partial_G^\star(u))=\phi'(\partial_{G'}^\star(u))\cup\{c_i\}$, which implies that $\overline{\phi'}(\partial_{G'}^\star(x_i))=
\overline{\phi'}(\partial_{G'}^\star(u))\cap (C\setminus\{c_i\})=\overline{\phi'}(\partial_{G'}^\star(u))\setminus\{c_i\}$.
Thus, $\overline{\phi'}(\partial_{G'}^\star(x_1))\cap\overline{\phi'}(\partial_{G'}^\star(x_2))=\overline{\phi'}(\partial_{G'}^\star(u))\setminus\{c_1,c_2\}$.
Since $|\overline{\phi'}(\partial_{G'}^\star(u))|=|C|-4\geq 4$, we have that there exist distinct
$c_3,c_4\in\overline{\phi'}(\partial_{G'}^\star(x_1))\cap
\overline{\phi'}(\partial_{G'}^\star(x_2))$.
Let $\nbrsgt{G}{3}(u)\setminus\{x_1,x_2\}=\{x_3\}$.
Suppose that for some $j\in\{3,4\}$, $c_j\in\overline{\phi'}(\partial_{G'}^\star(x_3))$. Then $c_j\in\overline{\phi'}(\partial_{G'}^\star(x_1))\cap \overline{\phi'}(\partial_{G'}^\star(x_2))\cap\overline{\phi'}(\partial_{G'}^\star(x_3))$. Let $\varphi$ be obtained from $\varphi_1$ (or $\varphi_2$) by recoloring $u$ to $c_j$. Note that since it is also the case that $c_j\in\overline{\phi'}(\partial_{G'}^\star(u))\setminus\{c_1,c_2\}$, it follows that $\varphi$ is a partial total coloring of $G$ using colors from $C$ that satisfies $\mathcal{S}$.
Moreover, since for any $l\in\{1,2,3\}$, we have that $\varphi(\partial_G^\star(x_l))=\phi'(\partial_{G'}^\star(x_l))$ and $c_j\in\varphi(\partial_G^\star(u))\setminus\varphi(\partial_G^\star(x_l))$, we can conclude that $\varphi$ is a partial AVD total coloring of $G$, and we are done. So we assume that for each $j\in\{c_3,c_4\}$, we have $c_j\notin\overline{\phi'}(\partial_{G'}^\star(x_3))$, or in other words $\{c_3,c_4\}\subseteq\phi'(\partial_{G'}^\star(x_3))$. Clearly, at least one of $c_3,c_4$ is not equal to $\phi'(x_3)$. Without loss of generality, we assume that $c_3\neq\phi'(x_3)$. Then let $\varphi$ be obtained from $\varphi_1$ (or $\varphi_2$) by recoloring $u$ to $c_3$. As $c_3\notin \phi'(\partial_{G'}^\star(u))\cup\{c_1,c_2,\phi'(x_1),\phi'(x_2),\phi'(x_3)\}$, it follows that $\varphi$ is a partial total coloring of $G$ using colors from $C$ that satisfies $\mathcal{S}$ (recall that the only neighbours of $u$ that are colored in $\phi'$ are $x_1$, $x_2$, and $x_3$). Since for each $l\in\{1,2,3\}$, $\varphi(\partial_G^\star(x_l))=\phi'(\partial_{G'}^\star(x_l))$, and $\varphi(\partial_G^\star(u))\subseteq\phi'(\partial_{G'}(u))\cup\{c_1,c_2,c_3\}$, it can be verfied that $c_3\in\varphi(\partial_G^\star(u))\setminus\varphi(\partial_G^\star(x_1))$,
$c_3\in\varphi(\partial_G^\star(u))\setminus\varphi(\partial_G^\star(x_2))$, and $c_4\in\varphi(\partial_G^\star(x_3)) \setminus \varphi(\partial_G^\star(u))$. Hence we can conclude that $\varphi$ is a partial AVD total coloring of $G$.
\end{proof}

Before we prove the next lemma, which combines all the cases, we need an elementary observation.

\begin{definition}\label{def:zeta}
For two sets $A,B\subseteq\mathbb{N}$, let $\zeta(A,B)$ denote the set of all two-element subsets of $A\cup B$ that contain one element from $A$ and one element from $B$; i.e. $\zeta(A,B)=\{\{a,b\}\colon a\in A$ and $b\in B\}$.
\end{definition}

\begin{observation}\label{obs:sets}
Let $A_1,A_2\subseteq\mathbb{N}$ such that for each $i\in\{1,2\}$, we have $|A_1|,|A_2|\geq 2$ and $|A_1\cup A_2|\geq 3$.
\begin{enumerate}
    \item If $|A_1\cup A_2|=3$, then $|\zeta(A_1,A_2)|=3$.
    \item If $|A_1\cup A_2|=4$ and $|A_1|=|A_2|=2$, then $|\zeta(A_1,A_2)|=4$.
    \item In all other cases, $|\zeta(A_1,A_2)|\geq 5$.
\end{enumerate}
\end{observation}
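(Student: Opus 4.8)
The plan is to reduce everything to an exact count. Write $n=|A_1\cup A_2|$, $p=|A_1\setminus A_2|$, $q=|A_2\setminus A_1|$, and $r=|A_1\cap A_2|$. First I would observe that a two-element subset $\{x,y\}$ of $A_1\cup A_2$ fails to lie in $\zeta(A_1,A_2)$ precisely when $x$ and $y$ both lie in $A_1\setminus A_2$ or both lie in $A_2\setminus A_1$: in every remaining case (in particular whenever $x$ or $y$ lies in $A_1\cap A_2$) one of the two endpoints can be designated as lying in $A_1$ and the other as lying in $A_2$. Since $A_1\setminus A_2$ and $A_2\setminus A_1$ are disjoint, this yields the exact formula
\[
|\zeta(A_1,A_2)| \;=\; \binom{n}{2}-\binom{p}{2}-\binom{q}{2} \;=\; pq+r(p+q)+\binom{r}{2},
\]
the second equality being the expansion of $\binom{p+q+r}{2}$. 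The hypotheses become $p+r=|A_1|\ge 2$, $q+r=|A_2|\ge 2$, and $p+q+r=n\ge 3$.

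With this formula, parts (1) and (2) are immediate. If $n=3$, then $p+r\ge 2$ and $q+r\ge 2$ force $p\le 1$ and $q\le 1$, so both correction terms vanish and $|\zeta(A_1,A_2)|=\binom{3}{2}=3$. If $n=4$ and $|A_1|=|A_2|=2$, then $r=|A_1|+|A_2|-n=0$, hence $p=q=2$, and $|\zeta(A_1,A_2)|=6-1-1=4$.

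For part (3) I would use the form $|\zeta(A_1,A_2)|=pq+r(p+q)+\binom{r}{2}$, which is non-decreasing in each of the non-negative integers $p,q,r$, and run a short case split showing the value is at least $5$ in every configuration satisfying the hypotheses of (3). Assume without loss of generality $|A_1|\le|A_2|$. If $|A_1|=2$, then $(p,r)\in\{(2,0),(1,1),(0,2)\}$; in each case $q$ is constrained by $n\ge 4$ (and in the subcase $(p,r)=(2,0)$ the exclusion of the part-(2) configuration forces $q\ge 3$), and a one-line computation gives $|\zeta(A_1,A_2)|$ equal to $2q$ or $2q+1$, which is at least $5$. If $|A_1|\ge 3$, then $p+r\ge 3$ and $q+r\ge 3$; splitting on $r\in\{0,1,2\}$ and $r\ge 3$ and using $n\ge 4$ gives $|\zeta(A_1,A_2)|\ge 5$ in each subcase (for $r\ge 3$ the terms $\binom{r}{2}+r(p+q)$ alone already do it, using $p+q\ge 1$ when $r=3$).

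I do not expect a genuine obstacle here: the statement is a bookkeeping exercise, and the only thing that requires care is making the case analysis airtight — verifying that the configurations producing exactly $3$ and exactly $4$ are precisely those singled out in (1) and (2), using correctly that a "two-element subset" has distinct elements when deriving the formula, and keeping the overlap set $A_1\cap A_2$ in play throughout rather than implicitly assuming $A_1$ and $A_2$ are disjoint.
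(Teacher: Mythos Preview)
Your proposal is correct. The key identification of which two-element subsets of $A_1\cup A_2$ fail to lie in $\zeta(A_1,A_2)$ matches the paper's starting observation, but from there the two arguments diverge.

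The paper proceeds by direct inspection: for part~(1) it shows every pair from $A_1\cup A_2$ lies in $\zeta(A_1,A_2)$; for part~(2) it notes $A_1\cap A_2=\emptyset$ and counts; for part~(3) it assumes without loss of generality that $|A_1|\ge 3$, picks a pair $\{x,y\}\notin\zeta(A_1,A_2)$ if one exists, and then exhibits five explicit pairs in $\zeta(A_1,A_2)$ by hand. Your route instead extracts the closed-form count $|\zeta(A_1,A_2)|=\binom{n}{2}-\binom{p}{2}-\binom{q}{2}=pq+r(p+q)+\binom{r}{2}$ and reduces everything to arithmetic in the parameters $(p,q,r)$. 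This is a genuinely different organization: the formula makes parts~(1) and~(2) immediate and turns part~(3) into a short monotonicity-and-boundary check, at the cost of having to set up and verify the formula. The paper's argument avoids that setup but pays for it with an explicit witness-finding case split in part~(3). Both are elementary; yours is more systematic and would scale better if one wanted sharper bounds, while the paper's is self-contained without any auxiliary identity.
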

\begin{proof}
Observe that if for distinct $x,y\in A_1\cup A_2$, we have $\{x,y\}\notin\zeta(A_1,A_2)$, then for some $i\in\{1,2\}$, we have $\{x,y\}\cap A_i=\emptyset$. This can be seen as follows. Let us assume without loss of generality that $x\in A_1$ (since $x\in A_1\cup A_2$). As $\{x,y\}\notin\zeta(A_1,A_2)$, we have that $y\notin A_2$, which implies that $y\in A_1$ (since $y\in A_1\cup A_2$). Again, since $\{x,y\}\notin\zeta(A_1,A_2)$, this implies that $x\notin A_2$. Thus, we get that $\{x,y\}\cap A_2=\emptyset$.

Suppose that $|A_1\cup A_2|=3$. Consider distinct $x,y\in A_1\cup A_2$. We claim that $\{x,y\}\in\zeta(A_1,A_2)$. Otherwise, by the observation above, we have that for some $i\in\{1,2\}$, $\{x,y\}\cap A_i=\emptyset$. This is impossible since $|A_1\cup A_2|=3$ and $|A_i|\geq2$. Thus, for any distinct $x,y\in A_1\cup A_2$, we have $\{x,y\}\in\zeta(A_1,A_2)$. Thus $|\zeta(A_1,A_2)|={3\choose 2}=3$.

Next, suppose that $|A_1\cup A_2|=4$, $|A_1|=2$, and $|A_2|=2$. Clearly, we then have $A_1\cap A_2=\emptyset$, and therefore it easily follows that $|\zeta(A_1,A_2)|=4$.

So we assume without loss of generality that $|A_1\cup A_2|\geq 4$ and $|A_1|\geq 3$. If for each pair of distinct $x,y\in A_1\cup A_2$, we have $\{x,y\}\in\zeta(A_1,A_2)$, then we have $|\zeta(A_1,A_2)|\geq {4\choose 2}=6$, and we are done. So we assume that there exist distinct $x,y\in A_1\cup A_2$ such that $\{x,y\}\notin\zeta(A_1,A_2)$. Then by the observation above, we know that there exists $i\in\{1,2\}$ such that $\{x,y\}\cap A_i=\emptyset$. Suppose first that $\{x,y\}\cap A_1=\emptyset$.
Then $x,y\in A_2$ and for each $z\in A_1$, we have that $\{z,x\},\{z,y\}\in\zeta(A_1,A_2)$. Since $|A_1|\geq 3$, this implies that $|\zeta(A_1,A_2)|\geq 6$, and so we are done. We finally consider the case when $\{x,y\}\cap A_2=\emptyset$. This means that $x,y\in A_1$. Since $|A_2|\geq 2$, there exist distinct $z_1,z_2\in A_2$. Clearly, $\{z_1,x\},\{z_1,y\},\{z_2,x\},\{z_2,y\}\in\zeta(A_1,A_2)$. As $|A_1|\geq 3$, there exists some $w\in A_1\setminus\{x,y\}$. Clearly, at least one of $z_1$ or $z_2$, say $z_1$, is different from $w$. Then $\{z_1,w\}\in\zeta(A_1,A_2)$. It is easy to verify that the sets $\{z_1,x\},\{z_1,y\},\{z_2,x\},\{z_2,y\},\{z_1,w\}$ are all pairwise distinct. Thus $|\zeta(A_1,A_2)|\geq 5$.
\end{proof}
\begin{lemma}\label{lem:main}
Let $k\geq 5$, $C=[k+3]$, $G$ be a 3-degenerate graph having $\Delta(G)\leq k$, and $\mathcal{S}$ be any \tbd\ of $G$. Then there is a partial AVD total coloring of $G$ using colors from $C$ that satisfies $\mathcal{S}$.
\end{lemma}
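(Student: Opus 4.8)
The plan is to induct on $|E(G)|$. The base case is when $G$ is subcubic ($\Delta(G)\leq 3$), which is handled by Lemma~\ref{lem:subcubic} (since $|C|=k+3\geq 8\geq 7$). So assume $\Delta(G)\geq 4$; by Observation~\ref{obs:pivot} applied with $s=3$, $G$ has a $3$-pivot $u$ with $\deg_G(u)\geq 4$, i.e. $|\nbrsgt{G}{3}(u)|\leq 3$. I would split into cases according to the structure of $\nbrsleq{G}{3}(u)$, removing one carefully chosen edge (or a small set of edges/vertices) incident on $u$, applying the inductive hypothesis to the smaller graph together with an appropriately modified support $\mathcal{S}'$, and then extending the coloring back.

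The first case is $|\nbrsleq{G}{3}(u)|\leq 1$. If $\nbrsleq{G}{3}(u)=\emptyset$ then $u\in\high(G)$ has all neighbours in $\high(G)$, but $|\nbrsgt{G}{3}(u)|\leq 3<4\leq\deg_G(u)$ is impossible, so actually $|\nbrsleq{G}{3}(u)|\geq 1$ whenever $\deg_G(u)>3$; thus the subcase here is exactly $|\nbrsleq{G}{3}(u)|=1$ with $\deg_G(u)=4$, which is precisely the hypothesis of Lemma~\ref{lem:smallpivot} (note $|C|\geq 8$). Removing the edge $uv$ where $\{v\}=\nbrsleq{G}{3}(u)$ gives a $3$-degenerate graph $G'$ with $\Delta(G')\leq k$, and the inductive hypothesis supplies a partial AVD total coloring of $G'$ satisfying $\mathcal{S}'=\mathcal{S}\setminus\{P\in\mathcal{S}\colon v\in P\}$; Lemma~\ref{lem:smallpivot} then produces the desired coloring of $G$.

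The remaining case is $|\nbrsleq{G}{3}(u)|\geq 2$. Here the idea is to delete a vertex $v\in\nbrsleq{G}{3}(u)$ of degree at most $3$, or to delete an edge $uv$ for such a $v$, so that in $G'$ the pivot $u$ (or the neighbours of $u$) have enough free colors; after recoloring the (at most three) low-degree neighbours of $u$ and $u$ itself from $\phi'$, one re-inserts $v$ (respectively the edge $uv$) and its incident edges. The budget works out because $|C|=k+3$, $\deg_G(u)\leq k$, and $u$ has at most $3$ high-degree neighbours, so $u$ always has at least $3$ colors available for itself and each high-degree neighbour $x$ of $u$ has at least $3$ colors not on $\partial_G^\star(x)$; when a low-degree neighbour $v$ of $u$ is reinserted, it needs at most $|\partial_G(v)|+|N_G(v)|\leq 6$ colors forbidden, leaving room. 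The support $\mathcal{S}$ is handled as in the earlier lemmas: drop from $\mathcal{S}$ the pair containing $v$ (if any), solve the smaller instance, and when reinserting $v$ use the freedom in choosing the color of one of $v$'s incident edges (or the partner vertex $v'$ prescribed by $\mathcal{S}$) to guarantee $\phi(\partial_G(v))\neq\phi(\partial_G(v'))$ — this is why Observation~\ref{obs:sets} is invoked, to count how many ways a degree-$2$ vertex's two edge-colors can be chosen and argue one choice avoids clashing with the partner.

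The main obstacle is the case $|\nbrsleq{G}{3}(u)|\geq 2$ together with the AVD condition among the (at most three) high-degree neighbours of $u$: when $\deg_G(u)=k$ is the maximum, $u$ and a high-degree neighbour $x$ of $u$ could both be maximum-degree vertices, and the sets $\phi(\partial_G^\star(u))$, $\phi(\partial_G^\star(x))$ are then of size $k+1=|C|-2$, leaving only a two-color slack — so distinguishing $u$ from all of $x_1,x_2,x_3$ simultaneously while also respecting the support and the proper-coloring constraints requires the careful counting argument of the type carried out in Lemma~\ref{lem:smallpivot} (there with $\deg_G(u)=4$), now generalized. Managing the simultaneous constraints — $u$ distinct from each $x_i$, the color of $uv$ not creating a new violation, the support pair satisfied, and the reinserted low-degree vertices recolorable — is the delicate bookkeeping, and I expect it to occupy the bulk of the proof; Observation~\ref{obs:sets} is the combinatorial engine that makes the counting go through.
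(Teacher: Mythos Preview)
Your overall framework---induction on $|E(G)|$, base case via Lemma~\ref{lem:subcubic}, finding a $3$-pivot $u$ of degree $\geq 4$ via Observation~\ref{obs:pivot}, and dispatching the case $|\nbrsleq{G}{3}(u)|=1$ through Lemma~\ref{lem:smallpivot}---matches the paper exactly. The gap is in the main case $|\nbrsleq{G}{3}(u)|\geq 2$. There the paper does \emph{not} remove a single edge or a single low-degree vertex as you propose: it selects two distinct low-degree neighbours $v_1,v_2$ of $u$ and removes both edges $uv_1,uv_2$ at once. This is the key move, because it forces $\deg_{G'}(u)\leq k-2$, so the set $X$ of colours missing from $\partial_{G'}^\star(u)$ has $|X|\geq 4$. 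Your one-edge deletion gives only $|X|\geq 3$ when $\deg_G(u)=k$; with three high-degree neighbours each able to block one colour pattern on $\partial_G^\star(u)$, three choices against three obstructions leaves no slack, and the recolouring of $u$ you allude to does not escape this. Observation~\ref{obs:sets} is then applied not to ``a degree-$2$ vertex's two edge-colours'' but to count unordered pairs $\{c_1,c_2\}\in\zeta(X_1,X_2)$ of colours assignable to $(uv_1,uv_2)$, where $X_i\subseteq X$; two short claims (four non-equivalent extensions give one that is AVD; three non-equivalent extensions whose colour-pairs miss some $c\in X$ allow $u$ to be recoloured to $c$) finish the argument.

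A second point you miss is that the support for $G'$ is not merely $\mathcal{S}$ with the pairs through $v_1,v_2$ dropped. When the ``partners'' $v'_1,v'_2$ (each either $v_i$ itself or its mate in $\mathcal{S}$) both have degree $2$ in $G'$ and are nonadjacent, the paper \emph{adds} the new pair $\{v'_1,v'_2\}$ to form $\mathcal{S}'$. This augmentation---together with a preparatory recolouring of the edge $v'_1v'_2$ when it exists and lies in $X$ (property $(*)$ in the proof)---is precisely what guarantees $|X_1\cup X_2|\geq 3$ later, so that Observation~\ref{obs:sets} applies. Without it the degenerate situation $X_1=X_2$, $|X_1|=2$ cannot be excluded and the counting collapses.
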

\begin{proof}
We use induction on $|E(G)|$. If $\Delta(G)\leq 3$, then we are done by Lemma~\ref{lem:subcubic}. So let us assume that $\Delta(G)\geq 4$.
Then we have by Observation~\ref{obs:pivot} that there exists a 3-pivot $u$ in $G$ such that $deg_G(u)\geq 4$. Since $u$ is a 3-pivot, it has at least one neighbour in $\low(G)$. Suppose that $u$ has exactly one neighbour in $\low(G)$. Then $deg_G(u)=4$. Let $\nbrsleq{G}{3}(u)=\{v\}$, $G'=G-\{uv\}$, and $\mathcal{S}'=\mathcal{S}\setminus\{P\in \mathcal{S}\colon v\in P\}$ (clearly, $\mathcal{S}'$ is a support of $G'$). By the inductive hypothesis, we have that there exists a partial AVD total coloring of $G'$ using colors from $C$ that satisfies $\mathcal{S}'$. Then we are done by Lemma~\ref{lem:smallpivot}.

We now consider the case when $u$ has at least two neighbours in $\low(G)$.
Let $v_1,v_2\in \nbrsleq{G}{3}(u)$.
For each $i\in\{1,2\}$, we define $v'_i\in V(G)$ as follows.
If $deg_G(v_i)=2$ and there exists $w\in V(G)$ such that $\{v_i,w\}\in \mathcal{S}$, then we define $v'_i=w$. Otherwise, we define $v'_i=v_i$.
Note that $v'_1,v'_2,u$ are all pairwise distinct.

Let $G'=G-\{uv_1,uv_2\}$.
Let $\mathcal{S}_0=\mathcal{S}\setminus\{P\in \mathcal{S}\colon\{v_1,v_2\}\cap P\neq\emptyset\}$.
If $deg_{G'}(v'_1)=deg_{G'}(v'_2)=2$ and $v'_1v'_2\notin E(G')$, then define $\mathcal{S}'=\mathcal{S}_0\cup\{\{v'_1,v'_2\}\}$. Otherwise, we let $\mathcal{S}'=\mathcal{S}_0$. It is not difficult to verify that in each case, $\mathcal{S}'$ is a pairwise disjoint collection of pairs of nonadjacent degree 2 vertices in $G'$; so $\mathcal{S}'$ is a support of $G'$.

By the inductive hypothesis, there exists a
partial AVD total coloring $\phi$ of $G'$ using colors from $C$ that satisfies $\mathcal{S}'$.
Let $\phi'$ be the partial AVD total coloring of $G'$ obtained from $\phi$ by uncoloring all the colored vertices in $\low(G')$. Clearly, $\phi'$ also satisfies $\mathcal{S}'$.
If $u\in\low(G')$, then choose $c\in C\setminus(\phi'(\partial_{G'}(u)\cup\nbrsgt{G'}{3}(u))$ (note that $c$ exists since $|C|\geq 8$, $|\partial_{G'}(u)|\leq 3$ and $|\nbrsgt{G'}{3}(u)|\leq 3$) and let $\phi''=\phi'\cup\{(u,c)\}$, otherwise let $\phi''=\phi'$. Note that $\phi''$ is a partial AVD total coloring of $G'$ that satisfies $\mathcal{S}'$ in which no vertex in $\low(G')\setminus\{u\}$ is colored.

As before, for a partial total coloring $\psi$ of a graph $H$ using colors from $C$ and $F\subseteq V(H)\cup E(H)$, we let $\overline{\psi}(F)=C\setminus\psi(F)$.

Let $X=\overline{\phi''}(\partial_{G'}^\star(u))$. Since $deg_{G'}(u)\leq\Delta(G)-2$, we have that $|\partial_{G'}^\star(u)|\leq\Delta(G)-1\leq k-1$, and therefore $|X|\geq |C|-(k-1)\geq 4$.

We now define another partial AVD total coloring $\varphi$ of $G'$ satisfying $\mathcal{S}'$, which shall be used for the remainder of the proof.
If $|X|>4$, or $deg_{G'}(v'_1)\neq 2$, or $deg_{G'}(v'_2)\neq 2$, or $v'_1v'_2\notin E(G)$, then we define $\varphi=\phi''$.
Otherwise, we have $|X|=4$, $deg_{G'}(v'_1)=deg_{G'}(v'_2)=2$ and $v'_1v'_2\in E(G')$. If $\phi''(v'_1v'_2)\notin X$, then we again define $\varphi=\phi''$. Otherwise, we have $\phi''(v'_1v'_2)\in X$. Then choose a color $c'\in C\setminus (X\cup\phi''(\partial_{G'}(\{v'_1,v'_2\})))$ (note that $c'$ exists since $|C|\geq 8$ and $|X|=4$) and assign it to $v'_1v'_2$. Define $\varphi$ to be the coloring so obtained. It is easy to verify that in any case, $\varphi$ is also a partial AVD total coloring of $G'$. Suppose that $\varphi$ does not satisfy $\mathcal{S}'$. Then it must be the case that there exists $P\in\mathcal{S}'$ such that either $v'_1\in P$ or $v'_2\in P$. We assume without loss of generality that $v'_1\in P$. Since $v'_1v'_2\in E(G')$, we know that $\mathcal{S}'=\mathcal{S}_0\subseteq\mathcal{S}$. We have from the definition of $v'_1$ that $P=\{v_1,v'_1\}$. This contradicts the fact that $P\in\mathcal{S}'=\mathcal{S}_0$. We can thus conclude that $\varphi$ satisfies $\mathcal{S}'$. Note that $\varphi$ also satisfies the following property:
\begin{quote}
    If $|X|=4$, $deg_{G'}(v'_1)=deg_{G'}(v'_2)=2$ and $v'_1v'_2\in E(G')$, then $\varphi(v'_1v'_2)\notin X$.\hfill ($*$)
\end{quote}    
Observe that $X=\overline{\phi''}(\partial_{G'}^\star(u))=\overline{\varphi}(\partial_{G'}^\star(u))$.

Two extensions $\varphi_1,\varphi_2$ of $\varphi$ which are partial total colorings of $G$ are said to be ``equivalent'' if $\varphi_1(\{uv_1,uv_2\})=\varphi_2(\{uv_1,uv_2\})$. Otherwise, we say that $\varphi_1$ and $\varphi_2$ are ``non-equivalent''.

\begin{claim}
Suppose that $\varphi$ has four pairwise non-equivalent extensions which are all partial total colorings of $G$ using colors from $C$, then at least one of them is a partial AVD total coloring of $G$.
\end{claim}

Let $\varphi_1,\varphi_2,\varphi_3,\varphi_4$ be four pairwise non-equivalent extensions of $\varphi$ which are partial total colorings of $G$ using colors from $C$.
Clearly, for each $i\in\{1,2,3,4\}$, since $\varphi_i$ is a partial total coloring of $G$ that is an extension of $\varphi$, we have that $\{\varphi_i(uv_1),\varphi_i(uv_2)\}\subseteq X$. Let $Y_i=X\setminus\{\varphi_i(uv_1),\varphi_i(uv_2)\}$. Notice that $Y_i=\overline{\varphi_i}(\partial_G^\star(u))$. For distinct $i,j\in\{1,2,3,4\}$, since $\varphi_i$ and $\varphi_j$ are non-equivalent, we have that $\varphi_i(\{uv_1,uv_2\})\neq \varphi_j(\{uv_1,uv_2\})$, which implies that $Y_i\neq Y_j$. As $u$ is a 3-pivot, we have that $|\nbrsgt{G}{3}(u)|\leq 3$.
Observe that since $v_1,v_2\in\low(G)$, neither of the edges $uv_1,uv_2$ can be incident on a vertex $x\in\nbrsgt{G}{3}(u)$, and therefore for every $x\in \nbrsgt{G}{3}(u)$ and every $i\in\{1,2,3,4\}$, $\overline{\varphi_i}(\partial_{G'}^\star(x))=\overline{\varphi}(\partial_G^\star(x))$.
Since $Y_1,Y_2,Y_3,Y_4$ are pairwise distinct, it follows that there exists some $l\in\{1,2,3,4\}$ such that for every $x\in \nbrsgt{G}{3}(u)$, $Y_{l}\neq\overline{\varphi}(\partial_{G'}^\star(x))$. 
We now have that for any $x\in \nbrsgt{G}{3}(u)$, $\overline{\varphi_{l}}(\partial_G^\star(u))=Y_{l}\neq\overline{\varphi}(\partial_{G'}^\star(x))=\overline{\varphi_{l}}(\partial_G^\star(x))$.
It follows that $\varphi_{l}$ is a partial AVD total coloring of $G$. This proves the claim.
\medskip

\begin{claim}
Suppose that there exist three pairwise non-equivalent extensions $\varphi_1,\varphi_2,\varphi_3$ of $\varphi$ using colors from $C$, each of which is a partial total coloring of $G$ satisfying $\mathcal{S}$, such that $|\bigcup_{i\in\{1,2,3\}} \varphi_i(\{uv_1,uv_2\})|<|X|$, then there exists a partial AVD total coloring of $G$ using colors from $C$ satisfying $\mathcal{S}$.
\end{claim}
There is nothing to prove if one of $\varphi_1,\varphi_2,\varphi_3$ is also AVD. So we assume that none of them are AVD. Since $\varphi$ is a partial AVD total coloring of $G'$ and $\varphi_1,\varphi_2,\varphi_3$ are extensions of $\varphi$, it follows that for each $i\in\{1,2,3\}$, there exists $x_i\in \nbrsgt{G}{3}(u)$ such that $\varphi_i(\partial_G^\star(u))=\varphi_i(\partial_G^\star(x_i))$. Since $\varphi_1,\varphi_2,\varphi_3$ are pairwise non-equivalent extensions of $\varphi$, we have that $\varphi_1(\partial_G^\star(u)), \varphi_2(\partial_G^\star(u)), \varphi_3(\partial_G^\star(u))$ are pairwise distinct. This implies that $x_1,x_2,x_3$ are pairwise distinct, from which it follows that $\nbrsgt{G}{3}(u)=\{x_1,x_2,x_3\}$.
Choose $c\in X\setminus \bigcup_{i\in\{1,2,3\}} \varphi_i(\{uv_1,uv_2\})$ (note that $c$ exists by the assumption in the statement of the claim).
Now let $\varphi'$ be obtained from $\varphi_1$ (or $\varphi_2$ or $\varphi_3$) by recoloring $u$ with $c$.
For each $i\in\{1,2,3\}$, since $c\in X\setminus\varphi_i(\{uv_1,uv_2\})$, we have that $c\notin\varphi_i(\partial_G^\star(u))=\varphi_i(\partial_G^\star(x_i))
=\varphi(\partial_{G'}^\star(x_i))=\varphi'(\partial_G^\star(x_i))$ (the second equality is due to the fact that the edges $uv_1$ and $uv_2$ cannot be incident on $x_i$). In particular, $c\notin\varphi'(\{x_1,x_2,x_3\})$.
Also, since $c\notin\varphi_i(\partial_G^\star(u))$ for any $i\in\{1,2,3\}$, we have that $c\notin\varphi'(\partial_G(u))$. It follows that $\varphi'$ is a partial total coloring of $G$ satisfying $\mathcal{S}$.
Also, we have $c\in\varphi'(\partial_G^\star(u))\setminus\varphi'(\partial_G^\star(x_i))$ for each $i\in\{1,2,3\}$, which means that $\varphi'(\partial_G^\star(u))$ is different from $\varphi'(\partial_G^\star(x))$ for each $x\in \nbrsgt{G}{3}(u)$. Thus $\varphi'$ is a partial AVD total coloring of $G$ using colors from $C$. It is easy to see that $\varphi'$ satisfies $\mathcal{S}$. This proves the claim.
\medskip

Suppose that $\{v_1,v_2\}\in \mathcal{S}$.
Then $deg_{G'}(v_1)=deg_{G'}(v_2)=1$ and $v_1v_2\notin E(G')$.
Recall that in this case, we have $\mathcal{S}'=\mathcal{S}\setminus\{\{v_1,v_2\}\}$.
Let $\partial_{G'}(v_1)=\{e_1\}$ and $\partial_{G'}(v_2)=\{e_2\}$.
Further let $\varphi(e_1)=c_1$ and $\varphi(e_2)=c_2$.
Let $\mathcal{F}$ be the set of extensions of $\varphi$ using colors from $C$ which are partial total colorings of $G$.
Further, let $\mathcal{F}'\subseteq\mathcal{F}$ be a maximal set of pairwise non-equivalent extensions of $\varphi$ in $\mathcal{F}$. 

First suppose that $c_1=c_2=c$ (say).
Observe that any extension of $\varphi$ is a partial total coloring of $G$ if and only if it assigns two distinct colors from $X\setminus\{c\}$ to $uv_1$ and $uv_2$. In particular, for any $Z\in\zeta(X\setminus\{c\},X\setminus\{c\})$, there exists an extension $\psi\in\mathcal{F}$ such that $\psi(\{uv_1,uv_2\})=Z$ (recall Definition~\ref{def:zeta}). As $|X|\geq 4$, we have $|X\setminus\{c\}|\geq 3$. Then clearly, we have $\zeta(X\setminus\{c\},X\setminus\{c\})\neq\emptyset$, and therefore $\mathcal{F}\neq\emptyset$. Since $\mathcal{F}'$ is a maximal set of pairwise non-equivalent extensions in $\mathcal{F}$, we have 
$\{\psi(\{uv_1,uv_2\})\colon\psi\in\mathcal{F}'\}=\zeta(X\setminus\{c\},X\setminus\{c\})$.
It follows that $|\mathcal{F}'|=|\{\psi(\{uv_1,uv_2\})\colon\psi\in\mathcal{F}'\}|=|\zeta(X\setminus\{c\},X\setminus\{c\})|$. Note that any extension $\psi\in\mathcal{F}'$ also satisfies $\mathcal{S}$, since $\varphi$ satisfies $\mathcal{S}'$, $\psi(uv_1)\neq\psi(uv_2)$, and $\psi(e_1)=c=\psi(e_2)$.
Recall that $|X\setminus\{c\}|\geq 3$.
If $|\mathcal{F}'|\geq 4$, then we have by Claim 1 that there is an extension in $\mathcal{F'}$ that is a partial AVD total coloring of $G$ that satisfies $\mathcal{S}$, and we are done.
Otherwise, $|\zeta(X\setminus\{c\},X\setminus\{c\})|=|\mathcal{F}'|\leq 3$, which implies by Observation~\ref{obs:sets} that $|X\setminus\{c\}|=3$ and $|\mathcal{F}'|=|\zeta(X\setminus\{c\},X\setminus\{c\})|=3$.
(This implies that $c\in X$ and $|X|=4$.)
Now $|\bigcup_{\psi\in\mathcal{F}'} \psi(\{uv_1,uv_2\})|\leq |X\setminus\{c\}|<|X|$. 
Then we have by Claim~2 that there exists a partial AVD total coloring of $G$ using colors from $C$ that satisfies $\mathcal{S}$, and we are done.

So we assume that $c_1 \neq c_2$. 
It is not hard to see that any extension of $\varphi$ that is a partial total coloring of $G$ has to assign a color from $X\setminus\{c_1\}$ to $uv_1$ and a different color from $X\setminus\{c_2\}$ to $uv_2$. As before, it follows from the fact that $\mathcal{F}'$ is a maximal set of pairwise non-equivalent extensions in $\mathcal{F}$ that $|\mathcal{F}'|=|\zeta(X\setminus\{c_1\},X\setminus\{c_2\})|$.
Clearly, $|X\setminus\{c_i\}| \geq 3$ for each $i\in\{1,2\}$. We consider two cases now. In the first case, if $X\setminus\{c_1\}=X\setminus\{c_2\}$, then $c_1,c_2\notin X$, which implies that $|X\setminus\{c_1\}|=|X\setminus\{c_2\}|=|X|\geq 4$. In the second case, we have $X\setminus\{c_1\}\neq X\setminus\{c_2\}$. From Observation~\ref{obs:sets}, it can be  seen that in both cases, $|\zeta(X\setminus\{c_1\},X\setminus\{c_2\})|\geq 5$, which implies that $|\mathcal{F}'|\geq 5$.
Note that since $\varphi$ satisfies $\mathcal{S}'$, it follows that if an extension $\psi\in\mathcal{F}$
does not satisfy $\mathcal{S}$, then $\psi(uv_1)=c_2$ and $\psi(uv_2)=c_1$. Since the extensions in $\mathcal{F}'$ are pairwise non-equivalent, all but at most one extension in $\mathcal{F}'$ satisfies $\mathcal{S}$. Thus we have that there are four extensions in $\mathcal{F}'$ that satisfy $\mathcal{S}$.
Hence, one of them is a partial AVD total coloring of $G$ that satisfies $\mathcal{S}$ by Claim 1, and we are done. 
\medskip

We shall now assume that $\{v_1,v_2\}\notin \mathcal{S}$. Note that this implies that $v'_1\neq v_2$ and $v'_2\neq v_1$.
We now define two sets $X_1,X_2\subseteq X$ and then assign a color from $X_1$ to $uv_1$ and a color from $X_2$ to $uv_2$ to obtain a partial total coloring of $G$.
For each $i\in\{1,2\}$, we define $X_i$ as follows.
If $v_i=v'_i$, then we let $X_i=X\setminus\varphi(\partial_{G'}(v_i))$.
On the other hand, if $v_i\neq v'_i$, then we define $X_i=X\setminus\varphi(\partial_{G'}(v'_i))$ if $\varphi(\partial_{G'}(v_i))\subseteq\varphi(\partial_{G'}(v'_i))$, and $X_i=X\setminus\varphi(\partial_{G'}(v_i))$ otherwise. Note that we always have $X_i\cap\varphi(\partial_{G'}(v_i))=\emptyset$.
It is easy to verify that any extension of $\varphi$ that assigns a color from $X_1$ to $uv_1$ and a different color from $X_2$ to $uv_2$ is a partial total coloring of $G$. In other words, any $\psi=\varphi\cup\{(uv_1,c_1),(uv_2,c_2)\}$, where $c_1\in X_1$ and $c_2\in X_2\setminus\{c_1\}$ is a partial total coloring of $G$. Let $\mathcal{F}$ denote the set of all such extensions of $\varphi$.

We claim that any $\psi\in\mathcal{F}$ satisfies $\mathcal{S}$. Suppose for the sake of contradiction that $\psi=\varphi\cup\{(uv_1,c_1),(uv_2,c_2)\}$, where $c_1\in X_1$ and $c_2\in X_2\setminus\{c_1\}$, does not satisfy $\mathcal{S}$.
Then there exists $\{w,z\}\in \mathcal{S}$ such that $\psi(\partial_G(w))=\psi(\partial_G(z))$. Suppose that $\{w,z\}\in \mathcal{S}'$. Then we have $deg_G(w)=deg_G(z)=deg_{G'}(w)=deg_{G'}(z)=2$. Since $w$ and $z$ have the same degree in both $G$ and $G'$, it is clear that neither of the edges $uv_1$ or $uv_2$ are incident on $w$ or $z$ in $G$. Thus $\partial_G(w)=\partial_{G'}(w)$ and $\partial_G(z)=\partial_{G'}(z)$. Since $\psi$ is an extension of $\varphi$, we now have that $\varphi(\partial_{G'}(w))=\psi(\partial_G(w))=\psi(\partial_G(z))=\varphi(\partial_{G'}(z))$. 
This contradicts the fact that $\varphi$ satisfies $\mathcal{S}'$. So we can conclude that $\{w,z\}\notin \mathcal{S}'$, or in other words, $\{w,z\}\in \mathcal{S}\setminus \mathcal{S}'$. From the construction of $\mathcal{S}'$, we have that $\{w,z\}\in \mathcal{S}\setminus \mathcal{S}_0$, or in other words $\{w,z\}\cap\{v_1,v_2\}\neq\emptyset$. 
Let us assume without loss of generality that $w=v_1$. Then it follows that $v'_1=z$ (and also that $v_1\neq v'_1$). Note that we now have that $\psi(\partial_G(v_1))=\psi(\partial_G(v'_1))$. Since we have assumed that $v'_1\neq v_2$, we have that $\partial_G(v'_1)=\partial_{G'}(v'_1)$, which implies that $\psi(\partial_G(v'_1))=\varphi(\partial_{G'}(v'_1))$. Note that $deg_{G'}(v_1)=1$ since $\{v_1,v'_1\}\in \mathcal{S}$. As $\psi$ is an extension of $\varphi$, we now have $\varphi(\partial_{G'}(v_1))\subseteq\psi(\partial_G(v_1))=\psi(\partial_G(v'_1))=\varphi(\partial_{G'}(v'_1))$.
It now follows from the definition of $X_1$ that $X_1=X\setminus\varphi(\partial_{G'}(v'_1))$, which implies that $c_1=\psi(uv_1)\notin\varphi(\partial_{G'}(v'_1))=\psi(\partial_G(v'_1))$. Thus we have that $c_1\in\psi(\partial_G(v_1))\setminus\psi(\partial_G(v'_1))$. This contradicts the fact that $\psi(\partial_G(v_1))=\psi(\partial_G(v'_1))$.
This shows that any $\psi\in\mathcal{F}$ is a partial total coloring of $G$ that satisfies $\mathcal{S}$.

It is not hard to verify that for each $i\in\{1,2\}$, we have $deg_{G'}(v_i)\leq 2$ and $deg_{G'}(v'_i)\leq 2$. It follows that $|X_i|\geq 2$ for each $i\in\{1,2\}$.
We claim that $|X_1\cup X_2|\geq 3$. Suppose for the sake of contradiction that $|X_1\cup X_2|=2$. Then $|X_1|=|X_2|=2$ and $X_1=X_2$. Note that the former implies by the definition of $X_1$ and $X_2$ that $|X|=4$.
Let $i\in\{1,2\}$.
From the definition of $X_i$, we can see that either $X_i=X\setminus\varphi(\partial_{G'}(v_i))$ or $X_i=X\setminus\varphi(\partial_{G'}(v'_i))$.
Suppose that $X_i=X\setminus\varphi(\partial_{G'}(v_i))$. Then since $|X_i|=2$, we have that $deg_{G'}(v_i)=2$, which implies that $deg_G(v_i)=3$, and therefore $v'_i=v_i$. Thus, we have that in any case, $X_i=X\setminus\varphi(\partial_{G'}(v'_i))$.
Since $X_1=X_2$, we now have that $\varphi(\partial_{G'}(v'_1))=\varphi(\partial_{G'}(v'_2))$. Since $\varphi$ satisfies $\mathcal{S}'$, it follows that $\{v'_1,v'_2\}\notin \mathcal{S}'$. Note that as $|X_1|=|X_2|=2$ and $|X|=4$, we have that $deg_{G'}(v'_1)=deg_{G'}(v'_2)=2$. It now follows from the construction of $\mathcal{S}'$ that $v'_1v'_2\in E(G')$.
From ($*$),
we now have that $\varphi(v'_1v'_2)\notin X$. Then we have that for each $i\in\{1,2\}$, $|\varphi(\partial_{G'}(v'_i))\cap X|\leq 1$, which implies that $|X_i|=|X\setminus\varphi(\partial_{G'}(v'_i))|\geq 3$, which is a contradiction. This proves that $|X_1\cup X_2|\geq 3$.

Let $\mathcal{F}'\subseteq\mathcal{F}$ be a maximal set of pairwise non-equivalent extensions of $\varphi$ from $\mathcal{F}$.
It is not hard to verify that $|\mathcal{F}'|=|\zeta(X_1,X_2)|$. Suppose that $|X_1\cup X_2|=3$. Then we have from Observation~\ref{obs:sets} that $|\mathcal{F}'|=|\zeta(X_1,X_2)|=3$. Let $\mathcal{F}'=\{\psi_1,\psi_2,\psi_3\}$. Notice that for each $i\in\{1,2,3\}$, we have $\psi_i(\{uv_1,uv_2\})\subseteq X_1\cup X_2$.
Since $|X_1\cup X_2|=3$, we have that $|\bigcup_{i\in\{1,2,3\}} \psi_i(\{uv_1,uv_2\})|\leq |X_1\cup X_2|<|X|$. We now have from Claim~2 that there exists a partial AVD total coloring of $G$ using colors from $C$ that satisfies $\mathcal{S}$, and so we are done. Hence we assume that $|X_1\cup X_2|>3$. We now have from Observation~\ref{obs:sets} that $|\mathcal{F}'|=|\zeta(X_1,X_2)|\geq 4$. Now we have from Claim~1 that at least one extension of $\varphi$ in $\mathcal{F}'$ is a partial AVD total coloring of $G$. This extension, being from $\mathcal{F}$, also satisfies $\mathcal{S}$, thereby completing the proof.
\end{proof}

\subsubsection*{Proof of Theorem~\ref{thm:main}}

Let $k=\Delta(G)$ and $C=[k+3]$. Since $\Delta(G)\geq 5$, we have $k\geq 5$ and $|C|\geq 8$.
By Lemma~\ref{lem:main}, we now have that there exists a partial AVD total coloring $\phi$ of $G$ using colors from $C$ that satisfies $\mathcal{S}$. Applying Lemma~\ref{lem:conflict}, we get that there exists an AVD total coloring of $G$ using colors from $C$ that satisfies $\mathcal{S}$. Since $|C|=\Delta(G)+3$, we are done.\hfill\qed
\bibliographystyle{plain}
\bibliography{reference.bib}

\end{document}